\newtheorem{theorem}{\textbf{Theorem}}
\newtheorem{lemma}{\textbf{Lemma}}
\newtheorem{corollary}{\textbf{Corollary}}
\newtheorem{definition}{Definition}
\newtheorem{remark}{\textbf{Remark}}
\newcommand{\cfbox}[2]{%
    \colorlet{currentcolor}{.}%
    {\color{#1}%
    \fbox{\color{currentcolor}#2}}%
}
\newcommand{\real}{\mathbb{R}}
\newcommand{\graph}{\mathcal{G}}
\newcommand{\mean}{\mathbb{E}}
\newcommand{\pa}{\mathcal{P}}
\newcommand{\ch}{\mathcal{C}}
\newcommand\indep{\protect\mathpalette{\protect\independenT}{\perp}}
\def\independenT#1#2{\mathrel{\rlap{$#1#2$}\mkern3mu{#1#2}}}
\newcommand{\Graph}{\mathcal{G}}
\title{\LARGE \bf
  Network Structure Identification from Corrupt Data Streams
}
\author{Venkat Ram Subramanian, Andrew Lamperski, and Murti V. Salapaka
\thanks{
The authors are with the Department of Electrical and Computer
   Engineering, University of Minnesota, Minneapolis, MN 55455, USA.{\tt\small subra148@umn.edu, alampers@umn.edu,  murtis@umn.edu}}
   \thanks{Work supported in part by NSF CMMI 1727096. }
}
\begin{document}
\maketitle
\begin{abstract}
Complex networked systems can be modeled as graphs with nodes
representing the agents and links describing the dynamic coupling
between them.
Previous work on network identification has shown that the network
structure of linear time-invariant (LTI) systems can be reconstructed
from the joint power spectrum of the data streams.
These results assumed that data is perfectly measured.
However, real-world data is
subject to many corruptions, such as inaccurate time-stamps, noise,
and data loss.
We show that
identifying the structure of linear time-invariant systems using
corrupt measurements results in the inference of erroneous links. We provide an exact characterization and prove that such erroneous links are restricted to
the neighborhood of the perturbed node. We extend the analysis of LTI
systems to the case of Markov
random fields with corrupt measurements. We show that data corruption
in Markov random fields results in spurious probabilistic
relationships in precisely the locations where spurious links arise in
LTI systems.
\end{abstract}
\section{Introduction}
Identification of network interaction structures is important for
several domains such as climate science \cite{fan2017elnino},
epidemiology \cite{kaufman2017methods},
neuroscience\cite{bassett2017network}, metabolic pathways
\cite{julius2011genetic}, quantitative finance
\cite{giudici2016graphical}\cite{kenett2015network}, the
internet-of-things \cite{perera2014context}\cite{guinard2010interacting} and video
streaming \cite{WanTan2003}.
In scenarios such as the power grid\cite{deka2018structure} and
financial markets it is impractical, impossible or impermissible to
externally influence the system. Here network structure identification
must be achieved via passive means. The passive identification of a
network of dynamically related agents is becoming more viable with
sensors and measurements becoming inexpensive coupled with the ease
and capability of communicating information.  

Often, the measurements in such large systems are subjected to
effects of noise \cite{stankovic2018distributed}, asynchronous sensor
clocks \cite{cho2014survey} and packet drops
\cite{leong17sensor}. When dealing with problems of identifying
structural and functional connectivity of a large network, there is a
pressing need to rigorously study such uncertainties and address
detrimental effects of corrupt data-streams on network
reconstruction. 
Such analysis can delineate
the effects of corrupted nodes on the quality of the network
reconstruction and suggest placement of high-fidelity sensors at
critical nodes.
\subsection{Related Work}
Network identification for linear systems has been extensively
studied. Below, we will give an overview of several research
themes in linear system network identification. However, the majority of works assume that the
measurements are perfect. 

Identifiability conditions for determining the transfer
functions are provided in \cite{weerts2018identifiability}. It is
shown that a network is identifiable if every node signal
is excited by either an external input or a noise signal that is
uncorrelated with the input/noise signals on the other nodes.
The effects of data corruption are not studied in this work.

For partially observed states, authors in \cite{hendrickx2018id}
provide necessary and sufficient conditions for \emph{generic} identifiability of all or a subset of the
transfer functions in the network. Similarly, the notion of \emph{global} 
idenitifiability has been studied in \cite{van2019necessary}. 
However, in both the articles, the topology of the
network is assumed to be known a priori.
Moreover, data measurements are assumed to be perfect. 

The problem of learning polytree structures has been studied in \cite{etesami2016learning} and \cite{sepehr2019blind}. 
The authors provide guarantees of a consistent reconstruction. However, the class of network structures was restricted to trees and
the data measurements are assumed to be ideal. In this article, we make no such assumptions on network structures and we study the problem
 when time-series data measurements are imperfect.

Authors in \cite{MatSal12} leveraged  multivariate Wiener filters to reconstruct the undirected topology of the
generative network model. With  assumptions of perfect measurements,
and linear time invariant interactions, it is established that the
multivariate Wiener filter can recover the \emph{moral graph}. In other
words, for each node, its parents, co-parents and children are detected. 
%

For a network of interacting agents with nonlinear dynamics
 and  strictly causal interactions, the authors in
\cite{quinn15DIG} proposed the use of directed information to
determine the directed structure of the network. Here too,
it is assumed that the data-streams are ideal with no distortions. 

The authors in \cite{yuan2011robust},\cite{chetty2013robust} use
dynamical structure functions (DSF) for network reconstruction
\cite{goncalves2008necessary} and consider measurement noise and
non-linearities in the network dynamics. The proposed method first
finds optimal DSF for all possible Boolean structures and then adopt a
model selection procedure to determine the best estimate. The authors
concluded that the presence of noise and non-linearities can even lead
to spuriously inferring fully connected network structures. Also, the
authors concluded that the performance of their algorithms degrades as
noise, network size and non-linearities increase. However, a precise
characterization of such spurious inferences in structure was not
provided.

\subsection{Our Contribution}
In this article, our problem of interest is to determine the Boolean
structure of a network, using passive means from corrupt
data-streams and characterize the spurious
links that can appear due to data-corruption.

In order to rigorously model data corruption, we present a general
class of signal disturbance models based on randomized state-space
systems. This class of disturbances subsumes many uncertainties that are prevalent in applications. We provide a detailed description on how the corruption model affects the second order statistics of the data-streams.

Next, we present the results for inferring the network topology for LTI systems from corrupt data-streams. Specifically, we identify a set of edges in the network in which
spurious links could potentially appear. The
results can be utilized to understand what part of the
reconstruction can be trusted and to allocate sensor resources in
order to
minimize the effects of data corruption.
 
Finally, we extend our analysis and provide connections with more
general graphical
models. We prove that there can be spurious edges inferred during
structure identification of undirected Markov random fields from
corrupt data. The results characterizing the location of the spurious
links are found to be identical to those obtained in LTI
systems.

This paper is an extension of our earlier work \cite{SLS17network}
wherein preliminary results characterizing the spurious links were
presented.  However, a rigorous description on the perturbation models
was not provided, and the work did not cover Markov random
fields.

\subsection{Paper Organization} 
We start by reviewing earlier work on LTI network identification using
power spectra in Section~\ref{sec:prelim}.  In Section~\ref{sec:NW pert},
we describe our data corruption models. In Section ~\ref{sec:LTI}, we
characterize the spurious links due to data corruption for LTI systems. Section ~\ref{sec:mrf} discusses
the effects of data-corruption in inferring the undirected structure
of a Markov random field. Simulation results are provided in
Section~\ref{sec:results}. Finally, a conclusion is provided in Section
~\ref{sec:conclude}. 

\subsection{Notation}
\noindent $Y$ denotes a vector with $y_i$ being $i^{th}$ element of $Y.$ \\
$z_i [\cdot ]$ denotes a sequence and $z_{i,t}$ denotes $z_i[t]$.\\
$\parallel \cdot \parallel$ denotes standard Eucledian norm for vectors.\\
$P_X$ represents the probability density function of a random variable $X$.\\
$X\indep Y$ denotes that the random variables $X$ and $Y$ are independent.\\
$i\to j$ indicates an arc or edge from node $i$ to node $j$ in a directed graph.\\  
$i - j$ denotes an undirected edge between nodes $i$, $j$ in an undirected graph.\\
If $M(z)$ is a transfer function matrix, then $M(z)^* = M(z^{-1})^T$ is the conjugate transpose. \\
$\mathbb{E}[\cdot ]$ denotes expectation operator.\\
$R_{XY}(k):=\mathbb{E}[X[n+k]Y[n]]$ is the cross-correlation function of jointly wide-sense stationary(WSS) processes $X$ and $Y$. If $Y=X$ then $R_{XX}(k)$ is called the auto-correlation.\\
$\Phi _{XY}(z):=\mathcal{Z}(R_{XY}(k))$ represents the cross-power spectral density while $\Phi _{XX}(z):=\mathcal{Z}(R_{XX}(k))$ denotes the power spectral density(PSD) where
$\mathcal{Z}(\cdot )$ is the Z-transform operator.\\
$b_i$ represents the $i^{th}$ element of the canonical basis of
$\real ^n$.

\section{Background on LTI Network Identification}\label{sec:prelim}

This section reviews earlier results on network identification from
ideal data streams. See
\cite{MatSal12}. Required graph theoretic notions are described in
Subsection~\ref{sec:graphPrelim}. The formal model of networked LTI
systems is presented in Subsection~\ref{sec:DIM}. Then, a result on
network identification via power spectra is given in Subsection~\ref{sec:SIideal}.
In later sections, we
will analyze these results in the case that data has been corrupted.

\subsection{Graph Theoretic Preliminaries}
\label{sec:graphPrelim}
We will review some terminology from graph theory needed to describe the
background results on LTI identification. For reference, see\cite{koller2009prob}.

\begin{definition}[Directed and Undirected Graphs]
\label{def:Graphs}
A \emph{directed graph} $G$ is a pair $(V,A)$ where $V $ is a
set of vertices or nodes and $A$ is a set of edges given by ordered pairs $(i,j)$
where $i,j\in V$. If $(i,j) \in A$, then we say that there is an edge
from $i$ to $j$.  $(V,A)$ forms an \emph{undirected graph} if $V$ is a
set of nodes or vertices and $A$ is a set of the
un-ordered pairs
$\{i,j\}$. 
\end{definition}
We also denote an undirected edge as $i-j$. 

\begin{definition}[Children and Parents]
\label{def:ChP}
Given a directed graph ${G}=(V,A)$ and a node $j\in V$, the children of $j$ are defined as $\ch (j):=\left\lbrace i|j\to i \in A\right\rbrace $ and the parents of $j$ as $\pa (j):=\left\lbrace i|i\to j \in A\right\rbrace $.
\end{definition}
\begin{definition}[Kins]
\label{def:Kins}
Given a directed graph $G=(V,A)$ and a node $j\in V$, kins of $j$ are defined as $\mathcal{K}_j:=\left\lbrace i|i\neq j \text{ and } i \in \mathcal{C}(j) \cup \mathcal{P}(j) \cup \mathcal{P}(\mathcal{C}(j))\right\rbrace $.
Kins are formed by parents, children and spouses. A spouse of a node is another node where both nodes have at-least one common child.
\end{definition}
\begin{definition}[Moral-Graph]
\label{def:KinGraph}
Given a directed graph $G=(V,A)$, its moral-graph is the undirected graph $G^M=(V,A^M)$ where $A^M:=\left\lbrace
  \{i,j\}|
  j\in V,i\in \mathcal{K}_j\right\rbrace .$ 
\end{definition}

Fig. \ref{fig:intro Graph} provides an example of a directed graph and its moral graph. 

\begin{figure}
\centering
  \begin{minipage}{0.35\linewidth}
    \centering
\begin{tikzpicture}[scale=0.55]
 \tikzstyle{vertex}=[circle,fill=none,minimum size=10pt,inner sep=0pt,thick,draw]
 \tikzstyle{pvertex}=[star,star points=10,fill=white,minimum size=10pt,inner sep=0pt,thick,draw]
  \node[vertex] (n1) {$1$};

  \node[vertex] (n2) at ($(n1) - (2em,3em)$) {$2$};%

        \node[vertex] (n3) at ($(n1) - (-2em,3em)$) {$3$};%

    \node[vertex] at ($(n1)-(0,6em)$) (n4) {$4$};%
	\node[vertex] at ($(n4)-(0,4em)$) (n5){$5$};
  \draw[->,thick] (n1)--(n2);
  \draw[->,thick] (n1)--(n3);
  \draw[->,thick] (n2)--(n4);
  \draw[->,thick] (n3)--(n4);
  \draw[->,thick] (n4)--(n5);
\end{tikzpicture}
 \subcaption{\label{fig:DG}}
 \end{minipage}\hfill
 \begin{minipage}{0.35\linewidth}
    \centering
\begin{tikzpicture}[scale=0.55]
 \tikzstyle{vertex}=[circle,fill=none,minimum size=10pt,inner sep=0pt,thick,draw]
 \tikzstyle{pvertex}=[star,star points=10,fill=white,minimum size=10pt,inner sep=0pt,thick,draw]
  \node[vertex] (n1) {$1$};

  \node[vertex] (n2) at ($(n1) - (2em,3em)$) {$2$};%

        \node[vertex] (n3) at ($(n1) - (-2em,3em)$) {$3$};%

    \node[vertex] at ($(n1)-(0,6em)$) (n4) {$4$};%
    	\node[vertex] at ($(n4)-(0,4em)$) (n5){$5$};

  \draw[-,thick] (n1)--(n2);
  \draw[-,thick] (n1)--(n3);
  \draw[-,thick] (n2)--(n4);
  \draw[-,thick] (n3)--(n4);
  \draw[-,thick] (n4)--(n5);
    \draw[-,thick] (n2)--(n3);
      
\end{tikzpicture}
 \subcaption{\label{fig:Kin Graph}}
 \end{minipage}
 \caption{
   \label{fig:intro Graph} 
   \ref{fig:DG} Directed Graph
    and \ref{fig:Kin Graph} its moral Graph.}
\end{figure}
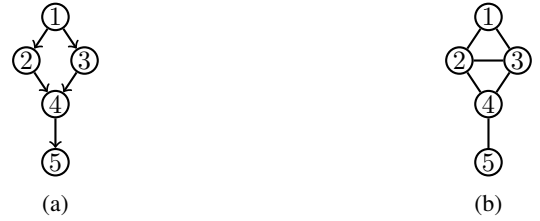
%
%
\subsection{Dynamic Influence Model for LTI systems}\label{sec:DIM}
Here the \textit{generative model} that is assumed to generate the
measured data is described. Consider $N$ agents that interact over a
network. For each agent $i$, we associate an observable discrete time sequence $y_i[\cdot]$
and a hidden noise sequence $e_i[\cdot].$ The process $e_i[\cdot]$ is considered
innate to agent $i$ and thus $e_i$ is independent of $e_j$ if $i\not=
j.$ We assume $e_i$ and $y_i$ to be jointly wide-sense stationary stochastic
processes. In particular, we assume they are bounded in a mean-square
sense: $\mean [\parallel  y_i[t]\parallel  ^2] <\infty $ and $\mean
[\parallel e_i[t] \parallel ^2] < \infty$.

Let $Y$ denote the set
of all random process $\{y_1,\ldots, y_N\}$ with a parent set
$\mathcal{P}' (i)$ defined for $i=1,\ldots,N.$ The parent set
$\mathcal{P}'(i)$ associated with agent $i$ does not include $i$.
 The process $y_i$ depends dynamically on the processes of its parents, $y_j$
with $j\in \mathcal{P}'(i)$ through an LTI filter whose impulse response is given by $\graph_{ij}$. Therefore, dynamics of node $i$ takes the form:
\begin{equation}\label{eq:idyntime}
y_i[t]=\sum^{N}_{j\in \mathcal{P}'(i)}(\graph _{ij} * y_j)[t] + e_i[t] \ \ \mbox{for
}i=1,\ldots, N. 
\end{equation}
where $*$ denotes convolution operation. Performing a $Z$-transform on both
sides gives
\begin{equation}\label{eq:iDyn}
y_i(z) = \sum^{N}_{j\in \mathcal{P}'(i)}\graph _{ij}(z)y_j(z) + e_i(z) \ \ \mbox{for
}i=1,\ldots, N. 
\end{equation}

For compact notation, we will often drop the $z$ arguments. Let $y=(y_1,y_2,\dots,y_N)^T$ and $e=(e_1,e_2,\dots,e_N)^T$. Then \eqref{eq:iDyn} is equivalent to 
\begin{equation}
\label{eq:ldgCompact}
y = \graph (z) y + e.
\end{equation}
The diagonal entries $\graph _{ii}(z)$ are considered to be zero.
We refer to \eqref{eq:ldgCompact} as the Dynamic Influence Model (DIM). Here, $\graph $ is termed as the DIM {\it generative connectivity}
matrix. The
DIM will be denoted by $(\graph ,e)$.
\begin{remark}\label{rem:corrNoise}
The process noise in \eqref{eq:idyntime} can be correlated across
time. In that case, $e_i$ is assumed to be represented as the convolution of
white noise with a stable LTI filter.
\end{remark}
\begin{remark}\label{rem:diagonalG}
The diagonal entries, $\Graph _{ii}(z)$ are considered to be zero only for simplification purposes to remove self-dependence in the dynamics. As will be seen later in sub-section ~\ref{sec:SIideal}, this enables us to consider Wiener filter projection of signal $y_i$ on all signals except $y_i$. Moreover, we can model the self-dependence and include it in the DIM through the process noise sequence by convolving a zero mean white noise with $\Graph _{ii}(z)$.
\end{remark}
We illustrate the notation by an example. Consider a network of five agents whose node dynamics are given by,
\begin{equation}\label{eq:eggenmod1}
\begin{aligned} 
y_1&= e_1\\
y_2&= \graph _{21}(z)y_1+e_2\\
y_3&= \graph _{31}(z)y_1+e_3\\
y_4&= \graph _{42}(z)y_2+\graph_{43}(z)y_3+e_4\\
y_5&= \graph _{54}(z)y_4+e_5
\end{aligned}
\end{equation} 
with $\graph=\left[\begin{array}{ccccc}
0 & 0 & 0 & 0&0\\
\graph _{21} & 0 & 0 & 0&0\\
\graph _{31}& 0 & 0 & 0 & 0\\
0 & \graph _{42}&\graph _{43}&0&0\\
0&0&0& \graph_{54} &0\\
\end{array}\right]
$. 
\begin{definition}[Generative Graph]\label{def:gengraph}
The structural description of  \eqref{eq:ldgCompact} induces a {\it generative graph} $G=(V,A)$ formed by identifying each vertex $v_i$ in $V$ with random process $y_i$ and the set of directed links, $A,$ obtained by introducing a directed link from every element in the parent set $\mathcal{P}'(i)$ of agent $i$ to $i.$ 
\end{definition}
Note that we do not show $i\to i$ in the generative graph and neither do we show the processes $e_i$. The generative graph associated with the examples described in ~\eqref{eq:eggenmod1} is given by Fig. \ref{fig:intro Graph} (a). 
\subsection{Identification from Ideal Measurements}\label{sec:SIideal}
The following results are obtained from \cite{MatSal12} where the authors have leveraged Wiener filters for determining generative graphs of a DIM. 
\begin{theorem}
\label{thm:KinshipInf}
{\it
Consider a DIM $(\graph,e)$ consisting of N nodes with generative graph $G$. Let the output of the DIM be given by $y=(y_1,\dots,y_N)^T$. Suppose that $S_j$ is the span of all random variables $y_k[t],\  t=\ldots -2,-1,0,1,2\ldots$ excluding $y_j$. Define the estimate $\hat{y}_j$ of the time-series $y_j$  via the optimization problem of
\begin{align*}\label{eq:WF}
\underset{\hat{y}_j \in S_j}{\text{min}}{\mathbb{E}\left[{(y_j-\hat{y}_j)}^T{(y_j-\hat{y}_j)}\right]}.
\end{align*}
Then a unique optimal solution to the above exists and is given by
\begin{equation}\label{eq:Non causal WF}
\hat{y}_j=\sum _{i\neq j}\mathbf{W}_{ji}(z)y_i
\end{equation}
where $\mathbf{W}_{ji}(z)\neq 0$ implies $y_i\in \mathcal{K}_{y_j}$ (equivalently $y_j\in \mathcal{K}_{y_i}$); that is $i$ is a kin of $j$.
}
\end{theorem}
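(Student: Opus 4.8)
The plan is to pass to the frequency domain and read off the support of the Wiener filter from the sparsity pattern of the inverse joint power spectral density. Write the DIM \eqref{eq:ldgCompact} as $y = H(z)e$ with $H(z) := (I-\graph(z))^{-1}$, where well-posedness of the DIM guarantees $H$ is a stable filter. Since the innate noises $e_i$ are mutually independent, $\Phi_{ee}(z)$ is diagonal, and $\Phi_{yy}(z) = H(z)\,\Phi_{ee}(z)\,H(z)^*$. Assuming, as in \cite{MatSal12}, that $\Phi_{ee}(z)$ is positive definite on the unit circle, $\Phi_{yy}(z)$ is boundedly invertible there; standard Hilbert-space projection theory then yields a unique minimizer $\hat y_j$, characterized by the orthogonality conditions $\mean[(y_j-\hat y_j)\,y_i^*]=0$ for all $i\neq j$. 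In spectral form these say that the row of filters $\mathbf W_j(z) := (\mathbf W_{ji}(z))_{i\neq j}$ satisfies $\Phi_{y_j,-j}(z) = \mathbf W_j(z)\,\Phi_{-j,-j}(z)$, i.e. $\mathbf W_j(z) = \Phi_{y_j,-j}(z)\,[\Phi_{-j,-j}(z)]^{-1}$, where $-j$ denotes the index set $\{1,\dots,N\}\setminus\{j\}$.

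Next I would relate $\mathbf W_j$ to $\Phi_{yy}^{-1}$. Partitioning $\Phi_{yy}$ along coordinate $j$ versus $-j$ and applying the block-matrix inversion (Schur-complement) identity gives $[\Phi_{yy}^{-1}]_{j,-j}(z) = -\,[\Phi_{yy}^{-1}]_{jj}(z)\,\mathbf W_j(z)$, where $[\Phi_{yy}^{-1}]_{jj}(z)$ equals the reciprocal of the error spectrum of $\hat y_j$ and is therefore a nonzero scalar function. Reading off the $i$-th component, $[\Phi_{yy}^{-1}]_{ji}(z) = -[\Phi_{yy}^{-1}]_{jj}(z)\,\mathbf W_{ji}(z)$, so $\mathbf W_{ji}(z)\neq 0$ if and only if $[\Phi_{yy}^{-1}]_{ji}(z)\not\equiv 0$. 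It then remains to show that $[\Phi_{yy}^{-1}]_{ji}\not\equiv 0$ with $i\neq j$ forces $i\in\mathcal K_{y_j}$.

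For this, compute $\Phi_{yy}^{-1}(z) = (I-\graph(z))^*\,\Phi_{ee}(z)^{-1}\,(I-\graph(z))$; writing $\Delta(z):=\Phi_{ee}(z)^{-1}$, which is diagonal, the $(j,i)$ entry is $\sum_k\big(\delta_{kj}-\graph_{kj}(z^{-1})\big)\,\Delta_{kk}(z)\,\big(\delta_{ki}-\graph_{ki}(z)\big)$. This can be nonzero only if some index $k$ satisfies both $\delta_{kj}-\graph_{kj}\not\equiv 0$ and $\delta_{ki}-\graph_{ki}\not\equiv 0$, i.e. $k\in\{j\}\cup\ch(j)$ and $k\in\{i\}\cup\ch(i)$ simultaneously (recall $\graph_{ki}\neq 0$ means $i\in\pa(k)$, equivalently $k\in\ch(i)$). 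For $i\neq j$ this leaves three cases: $k=j$ gives $i\in\pa(j)$; $k=i$ gives $i\in\ch(j)$; and $k\notin\{i,j\}$ gives $k\in\ch(i)\cap\ch(j)$, so $i$ and $j$ share a common child and are spouses. In every case $i\in\mathcal K_j$ per Definition~\ref{def:Kins}, which is the asserted implication; symmetry of the kin relation gives the parenthetical equivalence $y_j\in\mathcal K_{y_i}$.

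The step I expect to demand the most care is the first one: rigorously establishing existence and uniqueness of $\hat y_j$ and the spectral form of the orthogonality conditions for general (possibly non-rational, time-correlated) WSS processes, and in particular verifying that $\Phi_{-j,-j}(z)$ is invertible on the unit circle so that $\mathbf W_j(z)$ is well defined. This is exactly where positive-definiteness of $\Phi_{ee}$ and stability of $H(z)$ are used, and it is the point treated in detail in \cite{MatSal12}; the Schur-complement identity and the combinatorial support argument in the remaining steps are routine bookkeeping.
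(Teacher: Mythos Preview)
The paper does not actually prove Theorem~\ref{thm:KinshipInf}; it is quoted as a background result from \cite{MatSal12}, and the only hint of the underlying argument appears in Remark~\ref{rem:otherdirectionMatSal}, which records the identity $\Phi_{yy}^{-1}=(I-\graph)^*\Phi_{ee}^{-1}(I-\graph)$ and reads off the support. Your proposal is correct and is precisely the standard route: Wiener-filter normal equations give $\mathbf W_j=\Phi_{y_j,-j}\Phi_{-j,-j}^{-1}$, the Schur-complement identity ties the support of $\mathbf W_j$ to that of the $j$th row of $\Phi_{yy}^{-1}$, and the factorization $(I-\graph)^*\Phi_{ee}^{-1}(I-\graph)$ forces any nonzero off-diagonal entry to witness a parent/child/spouse relation. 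This is the argument of \cite{MatSal12} and matches what the paper sketches; there is nothing to compare against beyond that.
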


The solution in (\ref{eq:Non causal WF}) is the Wiener Filter solution which is given by  $\Phi _{y_jy_{\bar {j}}} \Phi ^{-1} _{y_{\bar {j}}{y_{\bar {j}}}}$ where $y_{\bar{j}}$ denotes the vector of all processes excluding $y_j$ and $\Phi$ denotes the power spectral density. 
Thus, Theorem~\ref{thm:KinshipInf} implies that we can reconstruct the
moral graph of a DIM by analyzing the joint power spectral density of
the measurements.
The following corollary gives a useful characterization of the
inferred kin relationships in terms of the sparsity pattern of
$\Phi_{yy}^{-1}$. 

\begin{corollary}\label{cor:informationSpectra}
{\it
  Under the assumptions of Theorem \ref{thm:KinshipInf}, let $\Phi_{yy}$ be the power spectral density matrix  of the vector process $y$. Then the $(j,i)$ entry of $\Phi_{yy}^{-1}$ is non zero implies that $i$ is a kin of $j$.
  }
\end{corollary}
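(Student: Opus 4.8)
The plan is to write the precision matrix $\Phi_{yy}^{-1}$ explicitly in terms of the Wiener filters $\mathbf{W}_{ji}$ of Theorem~\ref{thm:KinshipInf} and then invoke that theorem directly. Fix a node $j$ and permute the coordinates of $y$ so that $y_j$ appears first, writing, pointwise on the unit circle $z=e^{\mathrm{i}\omega}$,
\[
\Phi_{yy} = \begin{pmatrix} \Phi_{y_j y_j} & \Phi_{y_j y_{\bar{j}}} \\ \Phi_{y_{\bar{j}} y_j} & \Phi_{y_{\bar{j}} y_{\bar{j}}} \end{pmatrix},
\]
where $y_{\bar{j}}$ is the vector of all processes except $y_j$. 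Since $\Phi_{yy}$ is invertible, the principal submatrix $\Phi_{y_{\bar{j}} y_{\bar{j}}}$ is invertible, and the scalar Schur complement $S_j := \Phi_{y_j y_j} - \Phi_{y_j y_{\bar{j}}}\Phi_{y_{\bar{j}} y_{\bar{j}}}^{-1}\Phi_{y_{\bar{j}} y_j}$ is nonzero almost everywhere; in fact $S_j$ is precisely the power spectral density of the Wiener prediction error $y_j - \hat{y}_j$, which is why it is positive a.e.

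Next I would apply the standard block-matrix (Schur complement) inversion identity frequency-by-frequency. This yields for the first block row of $\Phi_{yy}^{-1}$
\[
(\Phi_{yy}^{-1})_{jj} = S_j^{-1}, \qquad (\Phi_{yy}^{-1})_{j\bar{j}} = -\,S_j^{-1}\,\Phi_{y_j y_{\bar{j}}}\,\Phi_{y_{\bar{j}} y_{\bar{j}}}^{-1}.
\]
By the expression for the Wiener filter solution stated after Theorem~\ref{thm:KinshipInf}, the row operator $\Phi_{y_j y_{\bar{j}}}\Phi_{y_{\bar{j}} y_{\bar{j}}}^{-1}$ is exactly the collection of filters $(\mathbf{W}_{ji})_{i\neq j}$ appearing in \eqref{eq:Non causal WF}. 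Hence, for every $i \neq j$,
\[
(\Phi_{yy}^{-1})_{ji}(z) = -\,S_j^{-1}(z)\,\mathbf{W}_{ji}(z).
\]

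Consequently, if $(\Phi_{yy}^{-1})_{ji} \not\equiv 0$ then $\mathbf{W}_{ji}\not\equiv 0$, and Theorem~\ref{thm:KinshipInf} immediately gives $y_i \in \mathcal{K}_{y_j}$, i.e. $i$ is a kin of $j$, which is the assertion. The only step requiring care — and the one I would treat most explicitly — is the legitimacy of the block operations in the relevant algebra of (rational/stable) transfer functions: that $\Phi_{y_{\bar{j}} y_{\bar{j}}}$ and the Schur complement $S_j$ are genuinely invertible rather than merely formally so. This follows from the positivity and the assumed invertibility of $\Phi_{yy}$, together with the existence-and-uniqueness part of Theorem~\ref{thm:KinshipInf} (which already presupposes $\Phi_{y_{\bar{j}} y_{\bar{j}}}^{-1}$ is well defined). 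Everything else is the routine Schur-complement computation.
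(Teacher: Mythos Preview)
Your argument is correct: the Schur-complement identity gives $(\Phi_{yy}^{-1})_{ji}=-S_j^{-1}\mathbf{W}_{ji}$ with $S_j$ the error spectrum, and since $S_j^{-1}$ is a nonvanishing scalar, nonzero $(\Phi_{yy}^{-1})_{ji}$ forces $\mathbf{W}_{ji}\not\equiv 0$, whence Theorem~\ref{thm:KinshipInf} yields the kin relation.

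The paper does not spell out a proof for this corollary (it is imported from \cite{MatSal12}), but the route suggested by Remark~\ref{rem:otherdirectionMatSal} is different from yours. There one uses the DIM relation $y=(I-\Graph)^{-1}e$ with $\Phi_e$ diagonal to write $\Phi_{yy}^{-1}=(I-\Graph)^*\Phi_e^{-1}(I-\Graph)$ and then reads off
\[
(\Phi_{yy}^{-1})_{ij}=-\Graph_{ij}\phi_{e_i}^{-1}-\Graph_{ji}^*\phi_{e_j}^{-1}+\sum_{k\in\ch(i)\cap\ch(j)}\Graph_{ki}^*\Graph_{kj}\phi_{e_k}^{-1},
\]
from which the kin conclusion is immediate term by term. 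That approach bypasses Theorem~\ref{thm:KinshipInf} entirely and yields the finer structural formula used in Remark~\ref{rem:otherdirectionMatSal} to argue the near-converse. Your Schur-complement route, by contrast, is a clean reduction of the corollary \emph{to} Theorem~\ref{thm:KinshipInf}; it is shorter and makes the logical dependence transparent, but it does not expose which of the parent/child/spouse terms is responsible for a given nonzero entry.
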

\begin{remark}\label{rem:otherdirectionMatSal}$\Phi_{yy}^{-1}(i,j)$ is described by $(i,j)$ entry of $(I-\Graph (z))^*\Phi ^{-1}_{e}(I-\Graph (z))$. Specifically, $\Phi_{yy}^{-1}(i,j)=-\Graph _{ij}\phi _{e_i}^{-1}-\Graph _{ji}^*\phi _{e_j}^{-1}+\sum _{k}\Graph _{ki}^*\Graph _{kj}\phi _{e_k}^{-1}$  where $k \in \ch (i)\cap \ch (j)$. For $i$ and $j$ being kins but $\Phi_{yy}^{-1}(i,j)$ to be zero, the transfer functions in $\Graph $ must be  belong to a set of measure zero on space of system parameters. For example,  system dynamics with transfer functions being zero or a static system with all noise sequences being identical. Therefore, except for these restrictive cases, the results in Theorem ~\ref{thm:KinshipInf} and Corollary \ref{cor:informationSpectra} are both necessary and sufficient. See \cite{MatSal12} for more details. 
\end{remark}
\section{Uncertainty Description }\label{sec:NW pert}
Subsection~\ref{sec:SIideal} describes a
methodology from \cite{MatSal12} for guaranteed kinship reconstruction based on Wiener
filtering. However, the results assume that the signals, $y_i$, are
measured perfectly. This paper aims to explain what would happen if we
attempted to apply the reconstruction method to data that has been
corrupted. We will see that extra links appear in the reconstruction,
and characterize the pattern of spurious links. While the analysis of
the next two sections 
focuses on LTI identification,
the results on Markov random fields in
Section~\ref{sec:mrf} indicate that the emergence and pattern of
spurious links are general properties of network reconstruction from
corrupted data.  

Subsection~\ref{sec:RSS} presents the general class of data
corruption models studied for LTI systems. The modeling framework is sufficiently general to capture
a variety of practically relevant perturbations, such as delays and
packet loss. However, we will see that all of the corruption models have similar
effects on the observed power spectra. Specific examples of
perturbation models are described in Subsection~\ref{sec:corruptionExamples}.

\subsection{Random State Space Models}
\label{sec:RSS}
This subsection presents the general class of perturbation models. 
Consider $i^{th}$ node in a network and let it's associated unperturbed time-series be $y_i$. The corrupt data-stream $u_i$ associated with $i$ is considered to follow the stochastic linear system described below:
\begin{subequations}
\label{eq:stateSpace}
\begin{align}
  x_i[{t+1}] &= A_i[t] x_i[t] + B_i[t] y_i[t] + w_i[t]\\
  u_i[t] &= C_i[t] x_i[t] + D_i[t] y_i[t] + v_i[t],
\end{align}
\end{subequations}
where $x_i$ denotes hidden states in the stochastic linear system that describes the corruption.
Here, the matrices, $M_i[t] = \begin{bmatrix}
  A_i[t] & B_i[t] \\
  C_i[t] & D_i[t]
\end{bmatrix}$ are independent, identically distributed (IID) and
independent of $y_i[t]$. The terms $w_i[t]$ and $v_i[t]$ are
zero-mean IID noise terms which are independent of $M_i[\cdot]$ and
$y_i[\cdot]$ and have  covariance:
\begin{equation}
  \label{eq:noiseCov}
  \mean\left[
    \begin{bmatrix}
      w_i[t] \\
      v_i[t]
    \end{bmatrix}
    \begin{bmatrix}
      w_i[t] \\
      v_i[t]
    \end{bmatrix}^\top
  \right] = \begin{bmatrix}
    W & S \\
    S^\top & V
    \end{bmatrix}.
  \end{equation}
  For distinct perturbed nodes, $i\ne j$, we assume that $M_i[]$,
  $w_i[]$, and $v_i[]$ are
  independent of $M_j[]$, $w_j[]$, and $v_j[]$.

  Denote the means of the state space matrices by $\bar A_i = \mean[A_i[t]]$, $\bar B_i = \mean[B_i[t]]$,
  $\bar C_i = \mean[C_i[t]]$, and $\bar D_i = \mean[D_i[t]]$.

Let $h_i$ be the impulse response of the system defined by
$\bar{A}_i,\bar{B}_i,\bar{C}_i,\bar{D}_i$:
\begin{equation}
\label{eq:impulse}
  h_i(k) = 
\left[
      \begin{array}{c|c}
        \bar A_i & \bar B_i
        \\
        \hline
        \vspace{-10px}
        \\
                \bar{C}_i & \bar D_i
      \end{array}
      \right](k)
    \end{equation}

    Note that $\bar u_i[t]=\mean[u_i[t]| y_i]=(h_i
    \star y_i)[t]$. 
\begin{theorem}\label{thm:perturbStats}
Assume that $M_i[t]$ has bounded
second moments and for all positive definite matrices $Q$, the following
generalized Lyapunov equation has a unique positive definite solution, $P$:
\begin{equation}
  \label{eq:genLyap}
  P = \mean[A_i[t] P A_i[t]^\top] + Q .
\end{equation}

Define $\Delta
u_i[t] := u_i[t]-\bar u_i[t]$. Then, the signals $u_i$ will be wide sense-stationary with cross-spectra and power spectra of the form:
\begin{subequations}
\label{eq:spectrumConditions}
\begin{align}
  \label{eq:uSpectrum}
\Phi_{u_i u_i}(z) &= H_i(z) \Phi_{y_i y_i}(z)H_i(z^{-1}) + \theta _i(z)\\  
  \label{eq:crossSpectrum}
  \Phi_{u_i y_i}(z) &= H_i(z) \Phi_{y_iy_i}(z)
\end{align}
\end{subequations} where, $H_i(z)=\mathcal{Z}(h_i)$ and $\theta _i(z)=\mathcal{Z}\left(R_{\Delta u_i\Delta u_i}[k]\right)$.
\end{theorem}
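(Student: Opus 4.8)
The plan is to split the argument into two largely independent parts: (i) showing that the corrupt stream $u_i$ is a well-defined, mean-square bounded process that is jointly wide-sense stationary with $y_i$ — this is where the generalized Lyapunov hypothesis \eqref{eq:genLyap} does all the work — and (ii) deriving the two spectral identities \eqref{eq:uSpectrum}--\eqref{eq:crossSpectrum} by conditioning on the whole trajectory of $y_i$ and using that $(M_i[\cdot], w_i[\cdot], v_i[\cdot])$ is independent of $y_i$.

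For part (i), I would first unroll the recursion \eqref{eq:stateSpace} into the bilateral series $x_i[t] = \sum_{s<t}\Psi_i(t,s)\big(B_i[s]y_i[s]+w_i[s]\big)$, where $\Psi_i(t,s):=A_i[t-1]A_i[t-2]\cdots A_i[s+1]$ is the random state-transition matrix (with $\Psi_i(t,t)=I$). The analytic heart of the proof is that the assumption that \eqref{eq:genLyap} admits a unique positive-definite solution for every $Q\succ0$ is equivalent to mean-square exponential stability of the random recursion $x[t+1]=A_i[t]x[t]$; concretely, writing $\mathcal{L}(X):=\mean[A_i[t]XA_i[t]^\top]$ and iterating while using that the factors $A_i[r]$ are IID shows that $\mean[\Psi_i(t,s)^\top\Psi_i(t,s)]$ is obtained by applying the adjoint of $\mathcal{L}$ exactly $t-1-s$ times, and the Lyapunov condition forces the spectral radius of $\mathcal{L}$ (hence of its adjoint) below one, so there are constants $c>0$, $\gamma\in(0,1)$ with $\mean[\Psi_i(t,s)^\top\Psi_i(t,s)]\preceq c\gamma^{t-1-s}I$. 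Combined with the bounded-second-moment hypothesis on $M_i[t]$ and the mean-square boundedness of $y_i$ and $w_i$, and with the observation that $\Psi_i(t,s)$ involves only matrices $A_i[r]$ with $r>s$ and so is independent of $(B_i[s],y_i[s],w_i[s])$, this yields $\mean[\|\Psi_i(t,s)(B_i[s]y_i[s]+w_i[s])\|^2]\le c'\gamma^{t-1-s}$; a Cauchy--Schwarz bound on the cross terms (which need not vanish because $A_i[s]$ and $B_i[s]$ sit in the same $M_i[s]$) then shows the series converges in $L^2$. Hence $x_i$, and therefore $u_i[t]=C_i[t]x_i[t]+D_i[t]y_i[t]+v_i[t]$, are well defined and mean-square bounded, and wide-sense stationarity (jointly with $y_i$) follows from the termwise representation: a shift $t\mapsto t+\tau$ merely relabels the IID families $\{M_i[r]\}$, $\{w_i[r]\}$, $\{v_i[r]\}$ and shifts the WSS input $y_i$, so $\mean[u_i[t]]$, $\mean[u_i[t+k]u_i[t]^\top]$ and $\mean[u_i[t+k]y_i[t]^\top]$ are each computed as sums depending only on $k$.

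For part (ii), I would first confirm the relation recorded before the statement, $\bar u_i[t]=\mean[u_i[t]\mid y_i]=(h_i\star y_i)[t]$. Since $x_i[t]$ depends only on $\{M_i[r],w_i[r],y_i[r]:r<t\}$, the matrix $M_i[t]$ (hence each of $A_i[t],B_i[t],C_i[t],D_i[t]$) is independent of the pair $(x_i[t],y_i)$, while $w_i[t],v_i[t]$ are zero-mean and independent of $y_i$; pushing $\mean[\,\cdot\mid y_i]$ through \eqref{eq:stateSpace} therefore shows $\xi_i[t]:=\mean[x_i[t]\mid y_i]$ obeys $\xi_i[t+1]=\bar A_i\xi_i[t]+\bar B_i y_i[t]$ and $\mean[u_i[t]\mid y_i]=\bar C_i\xi_i[t]+\bar D_i y_i[t]$, i.e.\ exactly the LTI system with impulse response \eqref{eq:impulse} driven by $y_i$, so $\bar u_i=h_i\star y_i$. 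Then by the tower property $R_{u_iy_i}(k)=\mean[\mean[u_i[n+k]\mid y_i]\,y_i[n]]=\mean[(h_i\star y_i)[n+k]\,y_i[n]]=(h_i\star R_{y_iy_i})(k)$, and taking the $\mathcal{Z}$-transform gives \eqref{eq:crossSpectrum}. For \eqref{eq:uSpectrum}, decompose $u_i=\bar u_i+\Delta u_i$; since $\bar u_i[\cdot]$ is $\sigma(y_i)$-measurable while $\mean[\Delta u_i[t]\mid y_i]=0$, the cross-correlations $R_{\bar u_i\Delta u_i}$ and $R_{\Delta u_i\bar u_i}$ vanish, so $R_{u_iu_i}(k)=R_{\bar u_i\bar u_i}(k)+R_{\Delta u_i\Delta u_i}(k)$; applying the standard LTI spectral-factorization identity to $\bar u_i=h_i\star y_i$ gives $\mathcal{Z}(R_{\bar u_i\bar u_i})=H_i(z)\Phi_{y_iy_i}(z)H_i(z^{-1})$, while $\mathcal{Z}(R_{\Delta u_i\Delta u_i})=\theta_i(z)$ by definition, which is \eqref{eq:uSpectrum}.

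The step I expect to be the main obstacle is the well-posedness/stationarity argument of part (i): turning the generalized Lyapunov hypothesis into the geometric mean-square decay of the random transition maps $\Psi_i(t,s)$, and then controlling the $L^2$-convergence of the bilateral series while correctly bookkeeping the dependencies among the entries of a single $M_i[s]$ (notably $A_i[s]$ versus $B_i[s]$) and the overlap of index ranges between the transition factors and the driving terms. By contrast, once $u_i$ is known to be jointly WSS with $y_i$, the derivation of \eqref{eq:uSpectrum}--\eqref{eq:crossSpectrum} is a routine conditioning computation; the only mild care needed there is that $y_i$ is assumed merely wide-sense — not strictly — stationary, so every manipulation must stay at the level of second moments.
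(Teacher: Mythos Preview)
Your proposal is correct, and part~(ii) is essentially identical to the paper's argument: the paper also uses the tower property to get $R_{u_iy_i}=h_i\star R_{y_iy_i}$, observes that $R_{u_iu_i}=R_{\bar u_i\bar u_i}+R_{\Delta u_i\Delta u_i}$ with $R_{\bar u_i\bar u_i}=h_i\star R_{y_iy_i}\star h_i^*$, and takes $Z$-transforms.

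Where you differ is part~(i). The paper does not unroll the recursion into a bilateral series or invoke the spectral radius of the Lyapunov operator $\mathcal{L}(X)=\mean[A_iXA_i^\top]$. Instead it first notes that \eqref{eq:genLyap} forces $\bar A_i$ to be Schur stable (so $\bar u_i$ and $\bar x_i$ are WSS as outputs of a stable LTI filter), and then establishes stationarity and boundedness of $\Delta u_i$ by \emph{explicitly computing} $R_{\Delta x_i\Delta x_i}[k]$ and $R_{\Delta u_i\Delta u_i}[k]$: it derives a recursion $\Delta x_i[k+1]=A_i[k]\Delta x_i[k]+\Delta A_i[k]\bar x_i[k]+\Delta B_i[k]y_i[k]+w_i[k]$, shows $R_{\Delta x_i\Delta x_i}[0]$ solves a generalized Lyapunov equation of the form \eqref{eq:genLyap} with a specific $Q$, and then obtains $R_{\Delta x_i\Delta x_i}[k]=\bar A_i^{k}R_{\Delta x_i\Delta x_i}[0]$ and a closed-form expression for $R_{\Delta u_i\Delta u_i}[k]$. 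Your operator-theoretic route is cleaner and more conceptual for the bare statement of the theorem; the paper's computational route is longer but yields, as a by-product, the explicit formulas for $R_{\Delta u_i\Delta u_i}[k]$ (Lemma~\ref{lem:deltaU}) that are then used to evaluate $\theta_i(z)$ in the worked examples of Section~\ref{sec:corruptionExamples}. Both are valid; they simply trade conceptual economy against downstream utility.
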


The proof is given in Appendix~\ref{app:perturbationProof}. 

\subsection{Data Corruption Examples}
\label{sec:corruptionExamples}
We will highlight a few corruptions that are practically relevant to
exemplify the above model description. More complex perturbations can
be obtained by composing these models. 

\subsubsection{Random Delays}
Randomized delays can be modeled by
\begin{equation}
\label{eq:randDelayMdl}
u_i[t] = y_i[t-d[t]]
\end{equation}
where $d[t]$ is a random variable. For example, if $d[t] \in
\{1,2,3\}$, then randomized delay model can be represented in
state-space form with no additive noise terms and state space matrices
given by:
\begin{equation*}
  \left[
  \begin{array}{c:c}
    A_i[t] & B_i[t] \\
    \hdashline
    C_i[t] & D_i[t]
  \end{array}
  \right]
    =
    \left[
      \begin{array}{c:c}
           \begin{bmatrix}
             0 & 0 & 0 \\
             1 & 0 & 0 \\
             0 & 1 & 0
           \end{bmatrix}
               &
\begin{bmatrix}
  1 \\ 0 \\ 0
\end{bmatrix} \\
              \hdashline
              b_{d[t]}^\top & 0
  \end{array}
                     \right],
                   \end{equation*}
where $b_1$, $b_2$, and $b_3$ are the standard basis vectors of
$\mathbb{R}^3$.

Say that $d[t]=j$ with probability $p_j$, for $j=1,2,3$. Then
\begin{equation}
  H_i(z) = p_1 z^{-1} + p_2 z^{-2} + p_3 z^{-3}.
\end{equation}
Let $p = \begin{bmatrix} p_1 & p_2 & p_3\end{bmatrix}$. The formal description to compute the 
expression for $\theta_i(z)$ is discussed in Lemma ~\ref{lem:deltaU} contained in the Appendix section. Using
Lemma~\ref{lem:deltaU} we have that $R_{\Delta u_i \Delta u_i}[t]=0$
for $t\ne 0$ and $ R_{\Delta u_i \Delta u_i}[0]$ is given by
\begin{equation}
R_{y_iy_i}[0] - p^\top \begin{bmatrix}
R_{y_iy_i}[0] & R_{y_iy_i}[1] & R_{y_iy_i}[2]\\
R_{y_iy_i}[-1] & R_{y_iy_i}[0] & R_{y_iy_i}[1]\\
R_{y_iy_i}[-2] & R_{y_iy_i}[-1] & R_{y_iy_i}[0]
\end{bmatrix}
p.
\end{equation}

\subsubsection{Measurement Noise}
White measurement noise can be represented in the form of
\eqref{eq:stateSpace} by setting $C_i[t]=0$, $D_i[t]=1$:
\begin{equation}\label{eq:measNoise}
  u_i[t] = y_i[t] + v_i[t].
\end{equation}
Colored
measurement noise with rational spectrum arises when $B_i[t]=0$, $D_i[t]=1$, and
the matrices $A_i[t]$ and $C_i[t]$ are constant. More generally, the
result of causally filtering the signal and then adding noise can be
modeled by taking all of the matrices in \eqref{eq:stateSpace} to be
constant. 

For the corruption model described in \eqref{eq:measNoise}, the perturbation transfer functions are given by:
\begin{align*}
  H_i(z) &= 1 \\
  \theta_i(z) &= \Phi_{v_i v_i}(z).
\end{align*}

\subsubsection{Adversarial Disinformation}
This is an example of data-corruption that is pertinent to
cyber-security. Here, the true data stream $y_i$ is completely
concealed and a new false data stream $v_i$ is introduced. This is an
extreme case of \eqref{eq:stateSpace} in which $C_i[t]$ and $D_i[t]$ are zero:
\begin{equation}
\label{eq:advDis}
u_i[t]=v_i[t]
\end{equation}

\subsubsection{Packet Drops}
Here the data stream suffers from randomly dropping measurement packets. The corrupted data stream $u_i$ is obtained from $y_i$ as follows:
\begin{equation}
 \label{eq:packetDrop}
u_i[t]=\begin{cases}
y_i[t], & \textrm{ with probability } p_i\\
u_i[t-1], & \textrm{ with probability } (1-p_i)
\end{cases}
\end{equation}
Packet drops can be modeled in the form of \eqref{eq:stateSpace} with
no noise and matrices given by:
\begin{equation}
  \label{eq:packetSS}
  \begin{bmatrix}
    A_i[t] & B_i[t] \\
    C_i[t] & D_i[t]
  \end{bmatrix}
  = \begin{cases}
    \begin{bmatrix}
      0 & 1 \\
      0 & 1
    \end{bmatrix} & \textrm{ with probability } p_i \\
    \vspace{-10px}
    \\
    \begin{bmatrix}
      1 & 0 \\
      1 & 0
    \end{bmatrix} & \textrm{ with probability } 1-p_i.
  \end{cases}
\end{equation}
The generalized Lyapunov equation becomes:
\begin{equation}
  P = p_i P \cdot 0 + (1-p_i) P \cdot 1 + Q
\end{equation}
which has the solution $P = Q/p_i$. Thus, the conditions for
Theorem~\ref{thm:perturbStats} hold, and so $u_i$ is wide-sense
stationary. 
In this case
\begin{equation}
  \begin{bmatrix}
    \bar A_i & \bar B_i \\
    \bar C_i & \bar D_i
  \end{bmatrix} = \begin{bmatrix}
    1-p_i & p_i \\
    1-p_i & p_i
  \end{bmatrix}
\end{equation}
so that $H_i(z) = \frac{p_i(1-p_i)}{z-(1-p_i)} + p_i =
\frac{p_i}{1-z^{-1}(1-p_i)}$.

The formal description to compute the 
expression for $\theta_i(z)$ is discussed in Lemma ~\ref{lem:deltaU} contained in the Appendix section.
The application of methods described in the Appendix to derive an expression for
$\theta_i(z)$ is cumbersome. However, $\theta_i(z)$ can be calculated
directly. Indeed, direct calculation shows that
\begin{equation}
  \label{eq:packetMean}
  (h_i \star R_{yy} \star h_i^*)[t] =
  \sum_{j=-\infty}^{|t|}\sum_{k=j}^{\infty} p_i^2 (1-p_i)^{|t|+k-2j}R_{yy}[k]
\end{equation}
while inductive application of \eqref{eq:packetDrop} shows that
\begin{equation}
  \label{eq:packetDelta}
  R_{uu}[t] = (1-p_i)^{|t|} R_{yy}[0] + \sum_{j=1}^{|t|}\sum_{k=j}^{\infty} p_i^2 (1-p_i)^{|t|+k-2j}R_{yy}[k].
\end{equation}
Here, the sum is interpreted as $0$ when $|t|=0$.

Subtracting \eqref{eq:packetMean} from \eqref{eq:packetDelta} and
taking $Z$-transforms gives
\begin{multline}
  \theta_i(z) = \frac{(1-p_i)^2}{(1-z^{-1}(1-p_i))(1-z(1-p_i))}\cdot
  \\
  \left(R_{yy}[0]+
    \sum_{j=-\infty}^{0}\sum_{k=j}^{\infty} p_i^2 (1-p_i)^{k-2j}R_{yy}[k]
    \right).
\end{multline}

\section{Spurious Links for Perturbed LTI systems}\label{sec:LTI}

The results reviewed from \cite{MatSal12} imply that kin relationships
could be inferred from the power spectra of ideal
measurements. However, the result of Theorem~\ref{thm:perturbStats}
implies that common types of data corruption cause perturbations to
the power spectrum of the observations. In this section, we will show
how use of the method from \cite{MatSal12} on corrupted data streams
 leads to the inference of spurious links. In
Subsection~\ref{sec:workedExample} we show how spurious links arise in
a simple example. Then in Subsection~\ref{sec:SIperturbed}, we
characterize the pattern of spurious links that could arise due to
data corruption. While these results in this section are specific to the power
spectrum inference method from \cite{MatSal12}, the work in
Section~\ref{sec:mrf} shows that the pattern of spurious links arises
more generally in network identification problems.  


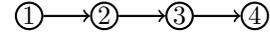
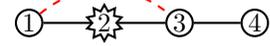
\begin{figure}[t]
  \centering
  \begin{subfigure}{0.9\columnwidth}
    \centering
        \begin{tikzpicture}[scale=0.25]
                \tikzstyle{vertex}=[circle,fill=none,minimum size=10pt,inner sep=0pt,thick,draw]
        \tikzstyle{pvertex}=[star,star points=10,fill=white,minimum size=10pt,inner sep=0pt,thick,draw]
          \node[vertex] (n1) {$1$};
          \node[vertex, right of=n1] (n2) {$2$};
          \node[vertex,right of=n2] (n3) {$3$};
          \node[vertex,right of=n3] (n4) {$4$};
          
          \draw[->,thick] (n1)--(n2);
          \draw[->,thick] (n2)--(n3);
          \draw[->,thick] (n3)--(n4);
                    
        \end{tikzpicture}
        \subcaption{\label{fig:simple Chain} Perfect Measurements}
  \end{subfigure}
    \begin{subfigure}{0.9\columnwidth}
    \centering
        \begin{tikzpicture}[scale=0.25]
                \tikzstyle{vertex}=[circle,fill=none,minimum size=10pt,inner sep=0pt,thick,draw]
        \tikzstyle{pvertex}=[star,star points=10,fill=white,minimum size=10pt,inner sep=0pt,thick,draw]
          \node[vertex] (n1) {$1$};
          \node[pvertex, right of=n1] (n2) {$2$};
          \node[vertex,right of=n2] (n3) {$3$};
          \node[vertex,right of=n3] (n4) {$4$};

          \draw[thick] (n1)--(n2);
          \draw[thick] (n2)--(n3);
          \draw[thick] (n3)--(n4);
          \draw[red,thick,dashed] (n1) to[out=40,in=140] (n3);
        \end{tikzpicture}
        \subcaption{
        \label{fig:perturb 2} Unreliable Measurements
        }
  \end{subfigure}
    \caption{
    \label{fig:corrupt spectra} 
When node $2$ has corrupt measurements an external observer might wrongly infer that the third node is directly influenced by node $1$.
  }
\end{figure}
%
\subsection{Example: Spurious Links due to Data Corruption}
\label{sec:workedExample}
Before presenting the general results, an example will be described.
 Consider the
generative graph  of a directed chain  in Figure \ref{fig:simple
  Chain}.  Suppose the measured data-streams are denoted by
$u_i$ for node $i$ where $u_i=y_i$ for $i=1,3,4$ (thus no data
uncertainty at nodes $1,\ 3$ and $4$)  and $u_2$ is related to $y_2$
via the randomized delay model described in \eqref{eq:randDelayMdl}.
In this case, the processes $u_i$ are jointly WSS and 
the PSD of the vector process $u=\left(u_1,\cdots, u_4\right)^\top$ is
                                     related to the PSD of the vector
                                     process $y$ via: 
\begin{equation*}
\begin{split}
\Phi_{uu}(z) = & \underbrace{\begin{bmatrix}
1 & 0 & 0 &0 \\
0 & h_2(z) & 0 & 0 \\
0 & 0 & 1 & 0 \\
0 & 0 & 0 & 1
\end{bmatrix}}_{H(z)}
\Phi_{yy}(z)
\underbrace{\begin{bmatrix}
1 & 0 & 0 &0 \\
0 & h_2(z^{-1}) & 0 & 0 \\
0 & 0 & 1 & 0 \\
0 & 0 & 0 & 1
\end{bmatrix}}_{H^*(z)}\\ & \quad + 
\underbrace{\begin{bmatrix}
0 & 0 & 0 &0 \\
0 & \theta_2(z) & 0 & 0 \\
0 & 0 & 0 & 0 \\
0 & 0 & 0 & 0
\end{bmatrix}}_{D},
\end{split}
\end{equation*}
where $h_2$ and $\theta_2$ were described in
Subsection~\ref{sec:NW pert}.

Note that $\mathcal{D} = b_2 \theta _2 b_2^T$, where $b_2 = \begin{pmatrix}0 & 1 & 0
  & 0\end{pmatrix}^T$. Set $\Psi(z) = H(z) \Phi_{yy}(z)H^*(z)$. It
follows from the Woodbury matrix identity \cite{cookbook2012} that 
\begin{equation}
  \label{eq:exSparsity}
  \Phi_{uu}^{-1}(z) = \Psi^{-1}(z) - \Psi^{-1}(z) b_2 b_2^T
  \Psi^{-1}(z) \Delta^{-1},
\end{equation}
where $\Delta = \theta _2^{-1} + b_2^T\Psi^{-1}(z) b_2$ is a scalar.

Corollary~\ref{cor:informationSpectra} implies that the sparsity
pattern of $\Phi_{yy}^{-1}(z)$ is given by:
\begin{equation}
  \label{eq:origSparisty}
  \Phi_{yy}^{-1}(z) =
  \begin{bmatrix}
    * & * & 0 & 0 \\
    * & * & * & 0 \\
    0 & * & * & * \\
    0 & 0 & * & *
  \end{bmatrix}
\end{equation}
where $*$ indicates a potential non-zero entry.

Since $H(z)$ is diagonal, it follows that  $\Psi^{-1}(z)$ and $\Phi_{yy}^{-1}(z)$ have the same sparsity
pattern. Thus, the sparsity pattern of $\Psi^{-1}(z)b_2$ and 
$\Psi^{-1}(z)b_2 b_2^T \Psi^{-1}(z)$ 
are given by: 
  \begin{equation}
    \label{eq:perturbationSparsity}
\Psi^{-1}(z)b_2 = \begin{bmatrix}
  * \\ 
  * \\
  * \\
  0 
\end{bmatrix},
\:\:
 \Psi^{-1}(z) b_2 b_2^T
  \Psi^{-1}(z) = 
\begin{bmatrix}
* & * & * & 0 \\
* & * & * & 0 \\
* & * & * & 0 \\
0 & 0 & 0 & 0
\end{bmatrix} 
\end{equation}
Combining \eqref{eq:exSparsity}-\eqref{eq:perturbationSparsity}, it
follows that the $\Phi_{uu}^{-1}(z)$ has sparsity pattern given by:
\begin{equation*}
  \Phi_{uu}^{-1}(z) = \begin{bmatrix}
    * & * & \cfbox{red}{*} & 0 \\
    * & * & * & 0 \\
    \cfbox{red}{*} & * & * & * \\
    0 & 0 & * & *
    \end{bmatrix}.
  \end{equation*}

  The extra filled spot in the inverse power spectral density
  corresponds to a spurious link. See Fig.~\ref{fig:corrupt spectra}.
  
\subsection{Determining Generative Topology from 
Corrupted Data Streams}\label{sec:SIperturbed} 
In this subsection, we will generalize the insights from the preceding
subsection to arbitrary DIMs. 
The following definitions are needed for the development to follow.

\begin{definition}[Path and Intermediate nodes]
Nodes $v_1,v_2,\dots ,v_k \in V$ forms a \emph{path} from $v_1$ to $v_k$ in an undirected graph $G=(V,A)$ if for every $i=1,2,\dots ,k-1$ we have $v_i- v_{i+1}$. The nodes $v_2,v_3,\dots ,v_{k-1}$  are called the \emph{intermediate nodes} in the path.
\end{definition}

\begin{definition}[Neighbors $\mathcal{N}$]
\label{def:Neighbor}
Let $G=(V,A)$ be an undirected graph. The neighbor set of node
$i$ is given by $\mathcal{N}=\{j: i-j\in A\}\cup \{i\}.$ 
\end{definition}
\begin{definition}[Erroneous Links]
\label{def:Erroneouslinks}
Let $G=(V,A)$ be an undirected graph. An edge or arc $i-j$ is
called an erroneous link when it does not belong to $A$ where
$i,j\in V$.  
\end{definition}
\begin{definition}[Perturbed Graph]
\label{def:corruptKG}
Let $G=(V,A)$ be an undirected graph. Suppose $Z\subset V$ is
the set of perturbed nodes. Then the perturbed graph of
$G$ with respect to set $Z$ is the graph
$G_Z=(V,A_Z)$ such that $i-j \in A_Z$ if either
$i-j\in A$ or there is a path from  $i$ to $j$ in $G$
such that all intermediate nodes are in $Z$. 
\end{definition}

Note that if $Z\subset \hat Z$, then $A_Z \subset A_{\hat Z}$.

The following theorem is the main result for LTI identification. 
 \begin{theorem}\label{thm:multiperturbation}
\it{
Consider a DIM $(\graph ,e)$ consisting of $N$ nodes with the moral graph
$G^M=(V,A^M)$. Let $Z=\{v_1,v_2,\dots,v_n\}$ be the set of
$n$ perturbed nodes where each perturbation satisfies
\eqref{eq:spectrumConditions}.  
Then 
$(\Phi_{uu}^{-1}(z))_{pq}\ne 0$ implies that $p$ and $q$ are neighbors in
the perturbed graph $G_{Z}^M$.  
}
\end{theorem}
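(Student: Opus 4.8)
The plan is to lift the Woodbury computation of the worked example in Subsection~\ref{sec:workedExample} to an arbitrary perturbed set $Z$ and then read off the zero pattern of $\Phi_{uu}^{-1}(z)$ by a graph-theoretic argument. First I would record the exact form of the observed spectrum. Applying Theorem~\ref{thm:perturbStats} node by node and using the cross-node independence of the perturbation models assumed in Section~\ref{sec:NW pert} (the perturbation data at a node $i$ are independent of the whole process $y$ and of the perturbation data at every other node), the cross terms $\mean[(h_i\star y_i)\,\Delta u_j]$ and $\mean[\Delta u_i\,\Delta u_j]$ vanish for $i\neq j$ (condition on $y$ and use $\mean[\Delta u_i\mid y]=\mean[\Delta u_i\mid y_i]=0$), so
\[
\Phi_{uu}(z)=H(z)\,\Phi_{yy}(z)\,H(z)^{*}+\Theta(z),
\]
where $H(z)=\mathrm{diag}(H_1,\dots,H_N)$ with $H_i\equiv 1$ for $i\notin Z$, and $\Theta(z)=\mathrm{diag}(\theta_1,\dots,\theta_N)$ with $\theta_i\equiv 0$ for $i\notin Z$. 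Writing $B=[\,b_{v_1}\ \cdots\ b_{v_n}\,]\in\real^{N\times n}$ and $\Theta_Z=\mathrm{diag}(\theta_{v_1},\dots,\theta_{v_n})$, and setting $\Psi:=H\Phi_{yy}H^{*}$, this reads $\Phi_{uu}=\Psi+B\Theta_Z B^{\top}$, a rank-$n$ additive perturbation of $\Psi$. The degenerate cases ($H_i\equiv 0$, i.e.\ pure disinformation at node $i$, or $\theta_i\equiv 0$) only decouple a node or effectively shrink $Z$, which can only remove off-diagonal entries of $\Phi_{uu}^{-1}$; so I would assume $H$ and $\Theta_Z$ invertible as rational matrices and note the general claim follows a fortiori (using $A^M_{Z'}\subseteq A^M_Z$ for $Z'\subseteq Z$).

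Next I would invoke the Woodbury identity \cite{cookbook2012},
\[
\Phi_{uu}^{-1}=\Psi^{-1}-\Psi^{-1}B\,M^{-1}B^{\top}\Psi^{-1},\qquad M:=\Theta_Z^{-1}+B^{\top}\Psi^{-1}B,
\]
where $M$ is invertible since $\det\Phi_{uu}=\det\Psi\cdot\det\Theta_Z\cdot\det M$. Two zero-pattern facts drive the conclusion. (i) Because $H$ is diagonal, $\Psi^{-1}=H^{-*}\Phi_{yy}^{-1}H^{-1}$ has the same support as $\Phi_{yy}^{-1}$, which by Corollary~\ref{cor:informationSpectra} lies in the diagonal together with $A^M$; hence $(\Psi^{-1})_{p v_i}\neq 0$ forces $p\in\nbr(v_i)$ in $G^M$, and $(B^{\top}\Psi^{-1}B)_{ij}=(\Psi^{-1})_{v_iv_j}\neq 0$ forces $v_i=v_j$ or $v_i-v_j\in A^M$. (ii) Consequently the sparsity graph of $M$ is a subgraph of the induced subgraph $G^M[Z]$ (plus loops); grouping its connected components makes $M$ block diagonal, so $(M^{-1})_{ij}\neq 0$ implies $v_i$ and $v_j$ lie in the same component of $G^M[Z]$, i.e.\ there is a path from $v_i$ to $v_j$ all of whose vertices lie in $Z$.

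Finally I would assemble the path. Suppose $(\Phi_{uu}^{-1})_{pq}\neq 0$ with $p\neq q$. Either $(\Psi^{-1})_{pq}\neq 0$, giving $p-q\in A^M\subseteq A^M_Z$; or $\sum_{i,j}(\Psi^{-1})_{pv_i}(M^{-1})_{ij}(\Psi^{-1})_{v_jq}\neq 0$, so some $i,j$ make all three factors nonzero. Concatenating the edge-or-coincidence $p\!\sim\!v_i$, a $Z$-internal path $v_i\to v_j$, and $v_j\!\sim\!q$ yields a walk from $p$ to $q$ whose intermediate vertices all lie in $Z$ (with $v_i$, resp.\ $v_j$, dropped from the interior when $p=v_i$, resp.\ $q=v_j$); deleting repeated vertices leaves a simple path with the same endpoints whose interior is still contained in $Z$, so $p-q\in A^M_Z$ by Definition~\ref{def:corruptKG}. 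In every case $p$ and $q$ are neighbors in $G^M_Z$ (the case $p=q$ being trivial by Definition~\ref{def:Neighbor}), which proves the theorem. I expect the main obstacles to be, first, establishing the clean additive-plus-diagonal form of $\Phi_{uu}$ once several nodes are perturbed — this is precisely where the cross-node independence of the perturbation models is needed — and second, the bookkeeping in the last step: correctly matching the concatenated walk to the definition of the perturbed graph, including the boundary cases where an endpoint coincides with a perturbed node.
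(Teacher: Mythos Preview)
Your proof is correct, but you take a different route from the paper. The paper proceeds inductively: it peels off the perturbed nodes one at a time, defining $\Psi_{k+1}=\Psi_k+b_{v_{k+1}}\theta_{v_{k+1}}b_{v_{k+1}}^\top$ and applying the rank-one Woodbury identity at each step to show that $(\Psi_k^{-1})_{pq}\neq 0$ implies $p$ and $q$ are neighbors in $G^M_{Z_k}$, where $Z_k=\{v_1,\dots,v_k\}$. You instead apply the block Woodbury identity once to the full rank-$n$ update $\Psi+B\Theta_Z B^\top$ and replace the induction by a structural observation about $M=\Theta_Z^{-1}+B^\top\Psi^{-1}B$: its sparsity graph sits inside $G^M[Z]$, so $M$ (and hence $M^{-1}$) is block diagonal along connected components of $G^M[Z]$, which forces any nonzero $(M^{-1})_{ij}$ to witness a $Z$-internal path between $v_i$ and $v_j$. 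This one-shot argument is more compact and makes the graph-theoretic content of the result transparent; the paper's inductive version is slightly more elementary (only scalar Sherman--Morrison is needed) and avoids reasoning about $M^{-1}$ altogether. Both proofs implicitly assume $\Psi$ is invertible (equivalently each $H_i\not\equiv 0$); you at least flag the degenerate $H_i\equiv 0$ and $\theta_i\equiv 0$ cases and dispose of them by monotonicity of $A^M_Z$ in $Z$, which the paper leaves unstated.
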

\begin{proof}
  First, we will describe the structure of $\Phi_{uu}(z)$. For compact
  notation, we will often drop the $z$ arguments.
  
  For $p=1,\ldots,N$, if $p$ is
  not a perturbed node, set $H_p(z) =1$ and $\theta _p(z) = 0$. With this notation,
  \eqref{eq:spectrumConditions} implies that the entries of $\Phi_{uu}$ are
  given by:
  \begin{equation*}
    (\Phi_{uu})_{pq} = \begin{cases}
      H_p  (\Phi_{yy})_{pq} H_q^{*}& \textrm{if } p\ne q \\
      H_p(\Phi_{yy})_{pp} H_p^{*}+ \theta _p & \textrm{if } p = q
    \end{cases}
  \end{equation*}
  When $p\ne
  q$, there is no $\theta $ term because the perturbations were assumed to
  be independent.

  In matrix notation, we have that:
  \begin{equation*}
    \Phi_{uu} = H \Phi_{yy} H^{*} + \sum_{k=1}^n \mathcal{D}_{v_k}
  \end{equation*}
  where $H$ is the diagonal matrix with entries $H_p$ on the
  diagonal and $\mathcal{D} _{v_k}(z) = b_{v_k} \theta _{v_{k}}(z) b_{v_{k}}^T$ where
  $b_{v_{k}}$ is the canonical unit vector with $1$ at entry
  $v_{k}$. 

  Set $\Psi_0 = H \Phi_{yy} H^{*}$. For $k=0,\ldots,n-1$, we can inductively define the matrices:
  \begin{equation}
    \label{eq:inductiveMatrices}
    \Psi_{k+1} = \Psi_k + b_{v_{k+1}}\theta _{v_{k+1}} b_{v_{k+1}}^T
  \end{equation}

 For $k=1,\ldots,n$ let
$Z_k = \{v_1,\ldots,v_k\}$ and let $G_{Z_k}^M$ be the perturbed graph constructed by adding edges $i-j$ to the
original moral graph if there is a path from $i$ to $j$ whose
intermediate nodes are all in $Z_k$.  
 
We will inductively prove the following claim: For $k=1,\ldots,n$, if
$(\Psi_k^{-1})_{pq} \ne 0$, then $p$ and $q$ are neighbors in
$G_{Z_k}^M$. Proving this claim is sufficient to prove the
theorem, since $\Psi_n = \Phi_{uu}$ and $Z_n = Z$. 

First we focus on the $k=1$ case. 
Using the Woodbury Matrix identity we have,   
$\Psi ^{-1} _{1}=\Psi ^{-1}_{0}-\Gamma _{1}$, where $\Gamma_1:= (\Psi^{-1} _{0}b_{v_{1}}b^T_{v_{1}}\Psi
^{-1}_{0})\Delta ^{-1}_{v_1}$ and $\Delta _{v_{1}}= \theta_{v_{1}}^{-1} + b_{v_{1}}^T\Psi_0^{-1}(z) b_{v_{1}}$ is a scalar. Therefore, $(\Psi ^{-1} _{1})_{pq}={(\Psi ^{-1}_{0})}_{pq}-{(\Gamma _{1})}_{pq}$.

If $(\Psi ^{-1} _{1})_{pq}\neq 0$ then at least one of the conditions
(i) ${(\Psi ^{-1}_{0})}_{pq}\neq 0 $ or (ii)  ${(\Gamma _{1})}_{pq}\neq
0$ must hold. 

Suppose that $(\Psi ^{-1}_{0})_{pq}\neq 0.$ Then  $ (H^{-*}(z){{\Phi}}^{-1}_{yy}H^{-1}(z))_{pq}\neq 0.$ As $H$ is diagonal it follows that  $(\Phi ^{-1}_{yy})_{pq} \neq 0$. From  Corollary~\ref{cor:informationSpectra}, it follows that $p$ and $q$ are neighbors in $G^M$. Thus $p$ and $q$ are neighbors in $G_{B_1}^M$.

Suppose that ${(\Gamma _{1})}_{pq}\neq 0.$ Then it follows that $
{(\Psi_{0}^{-1}b_{v_{1}}b^T_{v_{1}}\Psi^{-1}_{0})}_{pq}\Delta
^{-1}_{v_{1}}\neq 0.$ Thus $ {(\Psi ^{-1}_{0}b_{v_{1}})}_{p}\neq 0$ 
and ${(b^T_{v_{1}}\Psi ^{-1}_{0})}_{q} \neq 0$.  Noting that
$\Psi_0=H\Phi_{yy} H^*$, it follows that ,
$(\Phi^{-1}_{yy})_{pv_{1}}\not =0$ and $(\Phi^{-1}_{yy})_{v_{1}q}\not
=0$. From Corollary~\ref{cor:informationSpectra} it follows that
$v_1-p$ and $v_1-q$ are edges in the moral graph
$G^M$. Thus, there is a path from $p$ to $q$ whose only
intermediate node is $v_1 \in Z_1$. 
Thus, $p,q$ are neighbors in $G_{Z_1}^M$ and the claim is verified for $k=1$. 

Now assume that the claim holds for some $k> 1$. Combining the
Woodbury matrix identity with \eqref{eq:inductiveMatrices} implies
that
\begin{equation*}
  \Psi_{k+1}^{-1} = \Psi_{k}^{-1} - \Gamma_{k+1}
\end{equation*}
where $\Gamma_{k+1} = \Psi_k^{-1} b_{v_{k+1}}b_{v_{k+1}}^T
\Psi_k^{-1}\Delta_{v_{k+1}}^{-1}$ and $\Delta _{v_{k+1}}= \theta _{v_{k+1}}^{-1} + b_{v_{k+1}}^T\Psi_k^{-1}(z) b_{v_{k+1}}$ is a scalar. 

As before, if $(\Psi_{k+1}^{-1})_{pq} \ne 0$, then either
$(\Psi_{k}^{-1})_{pq} \ne 0$ or $(\Gamma_{k+1})_{pq} \ne 0$.

If
$(\Psi_k^{-1})_{pq}\ne 0$, then the induction hypothesis implies that
$p$ and $q$ are neighbors in $G_{Z_k}^M$. Since $Z_k \subset Z_{k+1}$,
it follows that $p$ and $q$ are neighbors in $G_{Z_{k+1}}^M$.

If $(\Gamma_{k+1})_{pq} \ne 0$, then as in the $k=1$ case, we must
have that $(\Psi_{k}^{-1})_{pv_{k+1}} \ne 0$ and
$(\Psi_{k}^{-1})_{v_{k+1}q} \ne 0$. This implies that $p-v_{k+1}
\in A_{Z_k}^M$ and $v_{k+1}-q\in A_{Z_k}^M$. Thus, either $p$ and
$v_{k+1}$ are kins in the original moral graph, or there is a path from
$p$ to $v_{k+1}$ whose intermediate nodes are in $Z_k$. Similarly, for
$q$ and $v_{k+1}$. It follows that there is a path from $p$ to $q$
whose nodes are in $Z_{k+1}$, and thus $p$ and $q$ are neighbors in
$G_{Z_{k+1}}^M$. The claim, and thus the theorem, are now proved.
\end{proof}
\begin{remark}\label{rem:perturbedGraph}
Similar to Remark ~\ref{rem:otherdirectionMatSal}, cases where $i$ and $j$ are kins in the original moral graph, $G^M$, but $\Phi_{uu}^{-1}(i,j)$ is zero are pathological. $\Phi_{uu}^{-1}(i,j)$ is expressed by terms in $\Phi ^{-1}_{yy}, H_l(z)$ and $\theta _l(z)$ where $l$ is a perturbed node. As remarked earlier, the entries in $\Graph (z)$ and the corruption model described in \eqref{eq:stateSpace} must belong to a set of measure zero on space of system parameters such that $\Phi_{uu}^{-1}(i,j)$ is zero. Therefore, except for these restrictive cases, the result in Theorem ~\ref{thm:multiperturbation} implies that we can identify the perturbed kin graph. 
\end{remark}
\section{Spurious Correlations of Perturbed Markov Random Fields}
\label{sec:mrf}

So far, we have shown how perturbing time-series data can give rise to
spurious inferences. The analysis was restricted to network
identification via Wiener filtering. In this section, we will show
that spurious links arising from data corruption is a more general
phenomenon. Specifically, we will show that the exact same patterns of
spurious links from Theorem~\ref{thm:multiperturbation} will arise in a general class of probabilistic graphical models
known as Markov random fields. 

Markov random fields can model a variety of distributions, including
continuous and discrete variables. However, our presentation here is
restricted to finite-dimensional random variables with well defined
probability mass or density functions. Thus, while the class is broad,
it does not subsume the analysis from Section~\ref{sec:LTI}, which
deals with infinite-dimensional time-series data. However, as we will
see Markov random fields can model time-series analysis problems with
finite amounts of data. 

\subsection{Background on Markov Random Fields}

Our presentation of Markov random fields will be closely related to
graph cliques:
\begin{definition}[Clique]
\label{def:clique}
Given an undirected graph ${G}=(V,A)$, a \emph{clique} is a complete sub-graph formed by a set of vertices $b\subset V$ such that for all distinct $i,j \in b$ there exists $i-j \in A$. 
\end{definition}

As an example of a Markov random field, consider a finite-dimensional
version of the model from \eqref{eq:eggenmod1}:
\begin{equation}\label{eq:staticExample}
\begin{aligned} 
y_1&= e_1\\
y_2&= M_{21}y_1+e_2\\
y_3&= M_{31}y_1+e_3\\
y_4&= M_{42}y_2+M_{43}y_3+e_4\\
y_5&= M_{54}y_4+e_5
\end{aligned}
\end{equation}
with $$M=\left[\begin{array}{ccccc}
0 & 0 & 0 & 0&0\\
M_{21} & 0 & 0 & 0&0\\
M_{31}& 0 & 0 & 0 & 0\\
0 & M_{42}& M_{43}&0&0\\
0&0&0& M_{54} &0\\
\end{array}\right]
.
$$
Here, we take $e_i$ to be independent Gaussian vectors with mean $0$
and covariance $E_i$. When only a finite amount of time series data has been
collected for the system in~\eqref{eq:eggenmod1}, the relationship
between the data points can be modeled as in~\eqref{eq:staticExample}.

Now we will see how the structure of the probabilistic relationships between the
variables, $y_i$ are encoded in the corresponding moral graph from
Fig.~\ref{fig:Kin Graph}.
If $y=\begin{bmatrix}y_1 & \cdots &
  y_5\end{bmatrix}^\top$ and $e=\begin{bmatrix}e_1 & \cdots
  & e_5\end{bmatrix}^\top$, then $y = (I-M)^{-1}e$. Use the notation
$\|x\|_{E_i^{-1}}^2=x^\top E_i^{-1} x$. Then direct
calculation shows that the density of $y$ factorizes as
\begin{multline}
  p(y) = c \cdot
    \exp\left(
      -\frac{1}{2} \|y_1\|_{E_1^{-1}}^2 - \frac{1}{2}\|y_2 -
      M_{21}y_1\|_{E_2^{-1}}^2 \right. \\
    \left.
-\frac{1}{2} \|y_3-M_{31} y_1\|_{E_3^{-1}}^2\right)
    \cdot \\ \exp\left(
      -\frac{1}{2}
      \| y_4 - M_{42}y_2 - M_{43}y_3\|_{E_4^{-1}}^2
    \right) \cdot \\ 
    \cdot \exp\left((-\frac{1}{2} \|y_5 - M_{54} y_4\|_{E_5^{-1}}^2\right).
  \end{multline}

  Note that the exponential factors contain variables
  $\{y_1,y_2,y_3\}$, $\{y_2,y_3,y_4\}$, and $\{y_4,y_5\}$. These
  variable groupings correspond precisely to the maximal cliques in the moral
  graph from Fig.~\ref{fig:Kin Graph}.

  As we will discuss below, having a distribution that factorizes with
  respect to a graph is a sufficient condition for being a Markov
  random field. See also \cite{lauritzen1996gm}.
  A
  generalization of the construction of~\eqref{eq:staticExample} shows
  that finite collections of time-series data can always be viewed as
  Markov random fields. 

  To formally define Markov random fields, we need some extra notation
  and terminology.
  Let $Y$ be a collection of variables, $Y = \{y_1,\ldots,y_{|V|}\}$
  corresponding to nodes of a graph, $G=(V,A)$. If $S\subset V$, then
  we use the notation $Y_S = \{y_i | i\in S\}$.

%
\begin{figure}
  \centering
          \begin{tikzpicture}[scale=0.35]
                \tikzstyle{vertex}=[circle,fill=none,minimum size=12pt,inner sep=0pt,thick,draw]
        \tikzstyle{pvertex}=[star,star points=10,fill=white,minimum size=12pt,inner sep=0pt,thick,draw]
          \node[vertex] (n1) {$1$};
          \node[pvertex, right of=n1] (n2) {$2$};
          \node[pvertex,right of=n2] (n3) {$3$};
          \node[vertex,right of=n3] (n4) {$4$};
          \node[vertex,right of=n4] (n5) {$5$};

          \node[vertex,above of=n2] (p2) {$2_p$};
          \node[vertex,above of=n3] (p3) {$3_p$};
          
          \draw[thick] (n1)--(n2);
          \draw[thick] (n2)--(n3);
          \draw[thick] (n3)--(n4);
          \draw[thick] (n4)--(n5);

          \draw[thick] (n2) -- (p2);
          \draw[thick] (n3) -- (p3);
          \draw[thick] (n1) to[out=330,in=220] (n3);
                    \draw[thick] (n2) to[out=330,in=220] (n4);
        \end{tikzpicture}
        \caption{\label{fig:cascadeJoint} Markov random field $G^J$ with
          perturbed nodes.}


         
\end{figure}
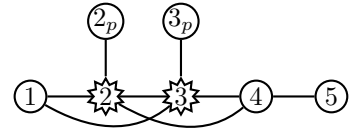
\begin{definition}[Separation]
Suppose $G=(V,A)$ is an undirected graph. Suppose, $a,b,c$ are
disjoint subsets of $V$. Then, $a$ and $b$ are \emph{separated} given
$c$ if all paths from $a$ to $b$ must pass through $c$.
\end{definition}
When $a$ and $b$ are separated given $c$, we write $\textrm{sep}(a,b\mid c)$.

  \begin{definition}[Markov random fields]
    \label{def:mrf}
    Let $Y$ be a collection of random variables associated with the
    nodes of an undirected graph, $G=(V,A)$. The variables $Y$ are
    called a \emph{Markov random
      field} with respect to $G$ if $Y_a$ and $Y_b$ are conditionally
    independent given $Y_c$ whenever $\textrm{sep}(a,b \mid c)$ holds.
\end{definition}

A useful sufficient condition for $Y$ to be a Markov random field with
respect to $G$ is for the distribution to factorize into terms
corresponding to cliques in the graph. This condition was used in the
example above. See \cite{lauritzen1996gm} for more details.

\begin{definition}[Clique Factorization]
Suppose that $Q$ is a collection of subsets of $V$ such that each $q\in Q$ forms clique in $G$. Let $P(Y)$ dentote the joint probability distribution of the random variables $Y$. Then, we say $Y$ \emph{factorizes} according to $G$, if for every $q\in Q$, there exists non-negative functions $\Psi_q $ that are functions of random variables in $q$ such that,  
\begin{equation}\label{eq:cliqueFacY}
P(Y)=\prod _{q\in Q}\Psi _q(Y_q)
\end{equation}
\end{definition}
\subsection{Inferring Erroneous Links}
\label{sec:perturbedMarkov}
Now we will describe the effects of data-corruption on inferring the
undirected graph structure from measured data.
In our work on time-series models, we assumed that individual data
streams were perturbed independently.
Here we will define a natural analog of independent perturbations for
Markov random fields. However, the perturbation models could be non-linear.

Let $Y$ be a Markov random field that factorizes with respect to a graph $G=(V,A)$. Let
$Z\subset V$ be the set of perturbed nodes. For each perturbed node,
$i\in Z$, we draw a new node $i_p$, draw an edge $i - i_p$, and denote
the
corresponding perturbed variable by $u_{i}$.
The probabilistic relationships between the original variable, $y_i$,
and the perturbed variable, $u_{i}$, is given by
$\Psi_{ii_p}(y_i,u_i)\ge 0$. Let $Z_p = \{i_p : i\in Z\}$ and let $U_{Z}$ denote
the set of perturbed variables. Then the joint distribution between
$Y$ and $U_{Z}$ can be described as:
\begin{equation}\label{eq:cliqueGt}
P(Y,U_{Z})=\prod _{q\in Q}\Psi _q(Y_q) \cdot \prod _{i\in Z}\Psi _{ii_p}(y_i,u_{i}).
\end{equation}
Since the node pairs, $\{i,i_p\}$ are
cliques, the construction above shows that the joint variables $(Y,U_Z)$ form a Markov
random field  with respect to $G^J = (V\cup Z_p,A\cup \{i - i_p : \forall
i\in Z\})$. See figure ~\ref{fig:cascadeJoint}.

Due to data corruption, only the variables $Y_{\bar Z}$ and $U_Z$ are
observed, where $\bar Z=V\setminus Z$. The next lemma shows that $(Y_{\bar{Z}},U_Z)$ is also a
Markov random field, with graph described by the perturbed graph. 
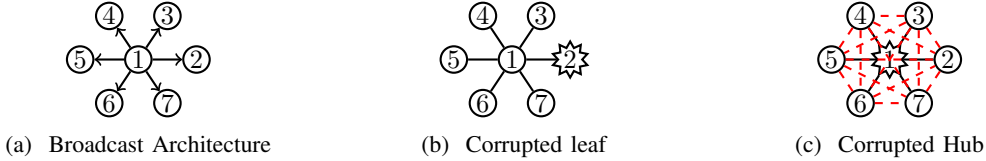
\begin{figure*}\label{fig:star}
  \centering
  \begin{subfigure}{0.55\columnwidth}
    \centering
\begin{tikzpicture}[scale=0.55]
 \tikzstyle{vertex}=[circle,fill=none,minimum size=10pt,inner sep=0pt,thick,draw]
 \tikzstyle{pvertex}=[star,star points=10,fill=white,minimum size=10pt,inner sep=0pt,thick,draw]
  \node[vertex] (n1) {$1$};
  \node[vertex] at ($(n1)+(2em,3em)$)(n3) {$3$};
  \node[vertex] (n6) at ($(n1) - (2em,3em)$) {$6$};
    \node[vertex] (n4) at ($(n1) + (-2em,3em)$) {$4$};
        \node[vertex] (n7) at ($(n1) - (-2em,3em)$) {$7$};
  \node[vertex] at ($(n1)+(4em,0)$) (n2) {$2$};
    \node[vertex] at ($(n1)-(4em,0)$) (n5) {$5$};

  \draw[->,thick] (n1)--(n2);
  \draw[->,thick] (n1)--(n3);
  \draw[->,thick] (n1)--(n4);
  \draw[->,thick] (n1)--(n5);
  \draw[->,thick] (n1)--(n6);
  \draw[->,thick] (n1)--(n7);
\end{tikzpicture}
 \subcaption{\label{fig:starDir} Broadcast Architecture}
 \end{subfigure}
  \begin{subfigure}{0.55\columnwidth}
    \centering
\begin{tikzpicture}[scale=0.55]
 \tikzstyle{vertex}=[circle,fill=none,minimum size=10pt,inner sep=0pt,thick,draw]
 \tikzstyle{pvertex}=[star,star points=10,fill=white,minimum size=10pt,inner sep=0pt,thick,draw]
  \node[vertex] (n1) {$1$};
  \node[vertex] at ($(n1)+(2em,3em)$)(n3) {$3$};
  \node[vertex] (n6) at ($(n1) - (2em,3em)$) {$6$};
    \node[vertex] (n4) at ($(n1) + (-2em,3em)$) {$4$};
        \node[vertex] (n7) at ($(n1) - (-2em,3em)$) {$7$};
  \node[pvertex] at ($(n1)+(4em,0)$) (n2) {$2$};
    \node[vertex] at ($(n1)-(4em,0)$) (n5) {$5$};

  \draw[-,thick] (n1)--(n2);
  \draw[-,thick] (n1)--(n3);
  \draw[-,thick] (n1)--(n4);
  \draw[-,thick] (n1)--(n5);
  \draw[-,thick] (n1)--(n6);
  \draw[-,thick] (n1)--(n7);
\end{tikzpicture}
 \subcaption{\label{fig:Corrupted leaf} Corrupted leaf}
 \end{subfigure}
 \begin{subfigure}{0.55\columnwidth}
    \centering
\begin{tikzpicture}[scale=0.55]
 \tikzstyle{vertex}=[circle,fill=none,minimum size=10pt,inner sep=0pt,thick,draw]
 \tikzstyle{pvertex}=[star,star points=10,fill=white,minimum size=10pt,inner sep=0pt,thick,draw]
  \node[pvertex] (n1) {$1$};
  \node[vertex] at ($(n1)+(2em,3em)$)(n3) {$3$};
  \node[vertex] (n6) at ($(n1) - (2em,3em)$) {$6$};
    \node[vertex] (n4) at ($(n1) + (-2em,3em)$) {$4$};
        \node[vertex] (n7) at ($(n1) - (-2em,3em)$) {$7$};
  \node[vertex] at ($(n1)+(4em,0)$) (n2) {$2$};
    \node[vertex] at ($(n1)-(4em,0)$) (n5) {$5$};

  \draw[-,thick] (n1)--(n2);
  \draw[-,thick] (n1)--(n3);
  \draw[-,thick] (n1)--(n4);
  \draw[-,thick] (n1)--(n5);
  \draw[-,thick] (n1)--(n6);
  \draw[-,thick] (n1)--(n7);
  \draw[red,thick,dashed] (n3)--(n4);
  \draw[red,thick,dashed] (n3)--(n5);
  \draw[red,thick,dashed] (n3)--(n6);
  \draw[red,thick,dashed] (n3)--(n7);
  \draw[red,thick,dashed] (n3)--(n2);

  \draw[red,thick,dashed] (n4)--(n5);
  \draw[red,thick,dashed] (n4)--(n6);
  \draw[red,thick,dashed] (n4)--(n7);
  \draw[red,thick,dashed] (n4)--(n2);

  \draw[red,thick,dashed] (n5)--(n6);
  \draw[red,thick,dashed] (n5)--(n7);
  \draw[red,thick,dashed] (n5)--(n2);
 
 \draw[red,thick,dashed] (n6)--(n7);
  \draw[red,thick,dashed] (n6)--(n2);
  
    \draw[red,thick,dashed] (n7)--(n2);

\end{tikzpicture}
 \subcaption{\label{fig:Corrupted Hub} Corrupted Hub}
 \end{subfigure}
 \caption{
    \label{fig:star} This figure shows an extreme example of the effect of data corruption of even a single node. \ref{fig:starDir} shows the original directed graph. \ref{fig:Corrupted leaf} shows that even if the leaf is corrupted there are no erroneous links introduced. But if the hub is corrupted as shown in \ref{fig:Corrupted Hub} then all the nodes become spuriously correlated. 
  }
 \end{figure*}
\begin{lemma}\label{lemma:UfacGz}
  Let $Y$ be a Markov random field with respect to $G=(V,A)$. Let
  $Z\subset V$ be a set of perturbed nodes and let $\bar{Z}=V\setminus
  Z$ be the unperturbed nodes. Assume that the joint distribution of
  $Y$ and the perturbed variables $U_Z$ factorizes as in
  \eqref{eq:cliqueGt}. Then the collection of observed variables $(Y_{\bar Z},U_Z)$ factorizes with respect
  to the perturbed graph $G_Z$ from Definition~\ref{def:corruptKG}.  
\end{lemma}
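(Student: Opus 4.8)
The plan is to eliminate the perturbation-auxiliary variables $Y_Z$ one at a time from the factorization \eqref{eq:cliqueGt} and track how the clique structure of the factorization changes. The key observation is that marginalizing a Markov random field over a single variable $y_i$ produces a new factorization in which $y_i$'s former neighbors become pairwise connected (the standard ``variable-elimination produces a clique among the neighbors'' fact for MRFs). Since by construction each perturbed node $i\in Z$ in the joint graph $G^J$ has as neighbors exactly its neighbors in $G$ together with its private copy $i_p$, eliminating $y_i$ will connect all of $i$'s $G$-neighbors and the copy $i_p$ (which carries $u_i$) into a single clique. Doing this for every $i\in Z$ yields precisely the edge set of $G_Z$: an edge $p-q$ survives or is created exactly when $p-q\in A$ or there is a path from $p$ to $q$ through intermediate nodes all lying in $Z$.

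Concretely, I would order the perturbed nodes $Z = \{i^{(1)},\dots,i^{(m)}\}$ and define intermediate variable collections $V^{(0)} = V\cup Z_p$, and $V^{(\ell)} = V^{(\ell-1)}\setminus\{i^{(\ell)}\}$, with corresponding graphs $G^{(0)} = G^J$ and $G^{(\ell)}$ obtained from $G^{(\ell-1)}$ by deleting $i^{(\ell)}$ and adding all edges between pairs of former neighbors of $i^{(\ell)}$. The first step is the single-node elimination lemma: if a distribution factorizes over cliques of a graph $H$, then after integrating (or summing) out one variable $x$, the result factorizes over cliques of $H$ with $x$ deleted and the neighbors of $x$ made complete. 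This is proved by collecting all clique factors $\Psi_q$ with $x\in q$ into a single product, integrating that product over $x$, and observing the result is a non-negative function of the union of the remaining variables in those cliques, which is exactly the neighbor set $\mathcal{N}(x)\setminus\{x\}$ — hence a factor on a clique of the modified graph; the factors not involving $x$ are untouched. The second step is a purely graph-theoretic induction showing $G^{(m)}$ restricted to $V = V^{(0)}\setminus Z_p$ has the edge set of $G_Z$: at stage $\ell$, any newly created edge $p-q$ arises because $p,q$ were both neighbors of $i^{(\ell)}$ in $G^{(\ell-1)}$, and unwinding the recursion, neighbor-of relations in $G^{(\ell-1)}$ correspond to paths in $G$ (or $G^J$) whose intermediate nodes lie in $\{i^{(1)},\dots,i^{(\ell-1)}\}\subset Z$; concatenating two such paths at $i^{(\ell)}$ gives a path in $G$ from $p$ to $q$ with all intermediate nodes in $Z$. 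Conversely every edge of $G_Z$ is produced this way.

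One technical point to handle cleanly: after eliminating $i^{(\ell)}$, no remaining node is a perturbation-copy's sole connection except through the clique just formed, and in particular the copies $j_p$ for $j\ne i^{(\ell)}$ are unaffected — so at the end each $u_j = $ value on $j_p$ sits in a clique together with (the now-completed set of) $j$'s neighborhood in the partially-eliminated graph, and since we only ever marginalize over $Y_Z$, the variable $y_j$ itself survives for $j\in\bar Z$, and $u_j$ survives for $j\in Z$. Matching these surviving variables against the vertex set $V$ of $G_Z$ (identifying $i_p$ with $i$, i.e. renaming the retained copy-node back to $i$ since the original $i\in Z$ was eliminated) shows the final factorization is a clique factorization with respect to $G_Z$, which is the claim.

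The main obstacle is the bookkeeping in the graph-theoretic induction: one must verify that the edge set after all eliminations is \emph{exactly} $A_Z$ — no fewer edges (every $G_Z$-edge is realized, including edges created at a late elimination stage chaining together edges created at earlier stages) and no spurious extra edges beyond $A_Z$ (an edge $p-q$ can only be created if $p$ and $q$ share a common eliminated neighbor at some stage, which forces a $Z$-path between them). Care is also needed because the ``make-neighbors-complete'' operation is order-dependent in intermediate graphs, so one should argue the final edge set is order-independent, most cleanly by showing directly that it equals $A_Z$ as defined via the existence of $Z$-intermediate paths, which is manifestly order-free.
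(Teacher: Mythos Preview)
Your proposal is correct and follows essentially the same route as the paper: order the perturbed nodes, eliminate the hidden variables $y_{v_k}$ one at a time, and use the standard fact that marginalizing a single variable turns its neighbor set into a clique, then verify inductively that the resulting clique structure is contained in the perturbed graph $G_Z$. The paper organizes the induction slightly differently---it identifies the intermediate graph after $k$ eliminations directly with $G_{Z_k}$ rather than carrying an abstract elimination graph $G^{(\ell)}$ and matching it to $G_Z$ at the end---but the content of the argument is the same.

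One small remark: for the lemma as stated you only need the containment direction, i.e.\ that every edge produced by elimination lies in $A_Z$; the ``no fewer edges'' half of your bookkeeping (that every edge of $A_Z$ actually appears) is not required here. The paper likewise proves only the containment, and the converse is handled separately via the positivity hypothesis in the subsequent theorem. Dropping that half also dissolves your worry about order-independence of the final edge set: you only need that the edges you do create lie in $A_Z$, and your path-concatenation argument already gives this regardless of elimination order.
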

\begin{proof}
We will
  prove the lemma for discrete random variables. The proof for
  continuous random variables is identical except that marginalization
  would be represented by integrals instead of sums.

  Let $Z = \{v_1,\ldots,v_n\}$, $Z_0 = \emptyset$ and $Z_k =
  \{v_1,\ldots,v_k\}$.
  We will prove inductively that $(Y_{\bar Z_k},U_{Z_k})$ factorizes
  with respect to $G_{Z_k}$.

  The base case with $Z_0=\emptyset$ is immediate since $(Y_{\bar
    Z_0},U_{Z_0})=Y$ and $G_{Z_0} = G$. Now
  assume inductively that $(Y_{\bar Z_{k-1}},U_{Z_{k-1}})$ factorizes with respect
  to $G_{Z_{k-1}}$ for some $k\ge 1$.

  \begin{align}
    \MoveEqLeft
    P(Y_{\bar Z_k},U_{Z_{k}})
    \\
    \label{eq:generalMarginal}
    &= \sum_{Y_{Z_k}} \prod_{q\in Q}
                                \Psi_q(Y_q) \prod_{i=1}^k
                                \Psi_{v_i,(v_i)_p}(y_{v_i},u_{v_i})
    \\
    \nonumber
    &= \sum_{y_{v_k}} \left(
\sum_{Y_{Z_{k-1}}} \prod_{q\in Q}
                                \Psi_q(Y_q) \prod_{i=1}^{k-1}
                                \Psi_{v_i,(v_i)_p}(y_{v_i},u_{v_i})
      \right) \cdot \\
    & \hspace{20px} \Psi_{v_k,(v_k)_p}(y_{v_k},u_{v_k}) \\
    &= \sum_{y_{v_k}} P(Y_{\bar Z_{k-1}},U_{Z_{k-1}})\Psi_{v_k,(v_k)_p}(y_{v_k},u_{v_k})
  \end{align}
  The first line, \eqref{eq:generalMarginal} follows by marginalizing
  $Y_{Z_k}$ out of the factorized
  distribution from \eqref{eq:cliqueGt}. Then the terms are regrouped
  and then \eqref{eq:generalMarginal} is employed for $P(Y_{\bar
    Z_{k-1}},U_{Z_{k-1}})$.

  By induction, we have that $P(Y_{\bar
    Z_{k-1}},U_{Z_{k-1}})$ factorizes according to a collection of
  cliques, $C$,  in $G_{Z_k}$. Let $C_{v_k} \subset C$ be the
  collection of cliques such that $v_k \in c$ for all $c\in
  C_{v_k}$. For compact notation, let $X=(Y,U_Z)$. Then the formula for $P(Y_{\bar Z_k},U_{Z_k})$ can be
  expressed as
  \begin{multline}\label{eq:cliqueSplit}
    P(Y_{\bar Z_k},U_{Z_{k}}) = \left(\sum_{y_{v_k}} \prod_{c\in
        C_{v_k}} \Psi_c(X_c)\Phi_{v_k,(v_k)_p}(y_{v_k},u_{v_k})\right)
    \cdot \\
    \prod_{c\in C\setminus C_{v_k}} \Phi_c(X_c).
  \end{multline}
  The second term on the right is a collection of factors
  corresponding to cliques in $G_{Z_{k-1}}$. Now, since
  $A_{Z_{k-1}}\subset A_{Z_k}$, they must also be cliques of
  $G_{Z_k}$. The lemma will be proved if the variables first term on the right
  correspond to a clique in $G_{Z_k}$. 

  Say $i\ne v_k$ and $j\ne v_k$ are nodes corresponding to variables in the sum in
  \eqref{eq:cliqueSplit}. Then there must be paths from $i$ to $v_k$
  and $v_k$ to $j$ such that any intermediate node is in
  $Z_{k-1}$. Now, since $v_k \in Z_{k}$, there is a path from $i$ to
  $j$ such that all of the intermediate nodes are in $Z_k$. Thus, the
  nodes in the sum form a clique in $G_{Z_k}$.  
\end{proof}

In our model, we have assumed that the variables corresponding to
corrupted nodes, $y_i$ for
$i\in Z$, are hidden. Then Lemma~\ref{lemma:UfacGz} shows that
marginalizing out the variables $y_i$ introduces new probabilistic
relationships between the neighbors of $y_i$. The new links between
variables are precisely described by the perturbed graph construction
of Theorem~\ref{thm:multiperturbation}. Note that \emph{any} method that
attempts to reconstruct the graphical structure of the Markov random
field based only on the observed data that contains corrupt data will be likely to detect
spurious relationships.\footnote{In some special cases, it may be
  possible to exploit prior knowledge of network structure to rule out
  some spurious links \cite{talukdar2017exact}.}

Below, we will show
that if $P(Y_{\bar Z},U_Z)$ is positive everywhere, then the perturbed graph
exactly characterizes the conditional independence of the observed
nodes.
To present this strengthened version of Lemma~\ref{lemma:UfacGz}, some
definitions are required. 

%

\begin{definition}[Pairwise Markov property]
Suppose $G=(V,A)$ is an undirected graph whose $N$ nodes represent random variables $y_1,\dots,y_N$. Let $Y=\{y_1,\dots ,y_N\}$. \emph{Pairwise Markov property} associated with $G$ holds, if for any non-adjacent vertices $i,j$, we have that 
$\textrm{sep}(i,j|V\setminus\{i,j\})$ implies that $y_i$ and $y_j$ are conditionally independent given $Y\setminus \{y_i,y_j\}$. 
\end{definition}

As in the discussion of LTI systems, it is convenient to identify the
observed but unperturbed variables $Y_{\bar Z}$ with $U_{\bar Z}$ so
that the collection of observed variables can be denoted by $U =
(U_{\bar Z},U_Z)$.

\begin{theorem}
Let $Y$ be a set of random variables that factorize according to graph
$G=(V,A).$ Suppose, $Z\subset V$, is a set of perturbed nodes such that
the joint distribution $(Y,U_Z)$ factorizes as in
\eqref{eq:cliqueGt}.
Let $U$ denote the set of all observed variables and assume that
$P(U)$ is positive everywhere.
Define $U_{\bar{i}\bar{j}}:=U\setminus \{u_i,u_j\}$.  Then, $i-j$ is
an edge in the perturbed graph, $G_Z$, if and only if $u_i$ is not conditionally independent of $u_j$ given $ U_{\bar{i}\bar{j}}$. 
\end{theorem}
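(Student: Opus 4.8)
The plan is to prove the two implications separately: the direction ``$u_i$ and $u_j$ conditionally dependent given $U_{\bar i\bar j}$ $\Rightarrow$ $i-j\in A_Z$'' follows almost immediately from Lemma~\ref{lemma:UfacGz}, whereas the converse is where the hypothesis that $P(U)$ is positive everywhere is essential, and is the delicate step.

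For the first direction I would argue by contraposition. Assume $i-j\notin A_Z$. By Lemma~\ref{lemma:UfacGz}, $P(U)$ factorizes over the cliques of the perturbed graph $G_Z$, and the factorization theorem (factorization implies the global, hence local and pairwise, Markov property, with no positivity needed; see \cite{lauritzen1996gm}) shows that $U$ is pairwise Markov with respect to $G_Z$. Since $i$ and $j$ are non-adjacent in $G_Z$, any path between them in $G_Z$ must use a vertex of $V\setminus\{i,j\}$, so $\textrm{sep}(i,j\mid V\setminus\{i,j\})$ holds in $G_Z$; the pairwise Markov property then yields $u_i\indep u_j\mid U_{\bar i\bar j}$. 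Contraposing gives that whenever $u_i$ is not conditionally independent of $u_j$ given $U_{\bar i\bar j}$, the pair $i-j$ is an edge of $G_Z$.

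For the converse, suppose $i-j\in A_Z$ and, aiming for a contradiction, that $u_i\indep u_j\mid U_{\bar i\bar j}$. Let $G_Z'$ be $G_Z$ with the edge $i-j$ removed; the only vertex pair that is non-adjacent in $G_Z'$ but adjacent in $G_Z$ is $\{i,j\}$. Combining the pairwise Markov property of $U$ with respect to $G_Z$ (just established) with the assumed independence of $u_i$ and $u_j$, it follows that $U$ satisfies the pairwise Markov property with respect to $G_Z'$. Because $P(U)$ is strictly positive, the Hammersley--Clifford theorem \cite{lauritzen1996gm} then gives that $P(U)$ factorizes over the cliques of $G_Z'$. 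It therefore remains to show that $P(U)$ does \emph{not} factorize over $G_Z'$, i.e.\ that the edge $i-j$ is genuinely needed.

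To establish this I would unwind the inductive marginalization in the proof of Lemma~\ref{lemma:UfacGz}. The edge $i-j$ was placed in $A_Z$ because of a path $i=w_0-w_1-\cdots-w_m=j$ in $G$ whose interior nodes lie in $Z$; tracking this path through the successive sums $\sum_{y_{v_k}}$ shows that, after marginalizing the hidden variables, $u_i$ and $u_j$ occur together inside a single clique factor of $P(U)$ assembled from the original clique potentials and the perturbation kernels along the path. The main obstacle is to argue that this marginalized factor genuinely couples $u_i$ and $u_j$ --- that it cannot be rewritten as a product of factors separating them. This is precisely the pathology flagged in Remarks~\ref{rem:otherdirectionMatSal} and \ref{rem:perturbedGraph}: for instance a perturbation kernel $\Psi_{ii_p}(y_i,u_i)$ that happens to be constant in $y_i$ (the probabilistic analogue of adversarial disinformation) makes $u_i$ independent of the rest even though $P(U)>0$. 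I would therefore close the argument under a mild non-degeneracy condition on the original potentials and the perturbation kernels --- guaranteeing that no factor is constant in one of its arguments and that the marginal integrals along the path do not accidentally factor --- under which the chain of potentials retains a true bivariate dependence between $u_i$ and $u_j$, contradicting the factorization over $G_Z'$ and completing the proof.
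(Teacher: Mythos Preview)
Your forward direction (conditional dependence $\Rightarrow$ edge in $G_Z$) is exactly the paper's argument: Lemma~\ref{lemma:UfacGz} gives factorization over $G_Z$, factorization implies the global (hence pairwise) Markov property, and non-adjacency in $G_Z$ is precisely $\textrm{sep}(i,j\mid V\setminus\{i,j\})$.

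For the converse the paper is much terser than you. It simply asserts that, under positivity of $P(U)$, the Lauritzen equivalence of the pairwise Markov property and factorization yields ``$\textrm{sep}(i,j\mid V\setminus\{i,j\})$ in $G_Z$ if and only if $u_i\indep u_j\mid U_{\bar i\bar j}$,'' and then observes that separation by $V\setminus\{i,j\}$ is the same as non-adjacency. You are right to be suspicious of this step. The Hammersley--Clifford circle of equivalences says that, for strictly positive densities, \emph{pairwise Markov, local Markov, global Markov, and factorization with respect to a fixed graph} are equivalent to one another; it does \emph{not} say that every edge of that graph reflects a genuine conditional dependence. Nothing in Lemma~\ref{lemma:UfacGz} rules out that $U$ also factorizes over a strict subgraph of $G_Z$, and your adversarial-disinformation example makes this concrete: a kernel $\Psi_{ii_p}$ constant in $y_i$ can give $P(U)>0$ while $u_i$ is independent of every other observed variable, despite $i$ having edges in $G_Z$. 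So the ``only if'' direction, as literally stated, really does need the non-degeneracy you flag (in the spirit of Remarks~\ref{rem:otherdirectionMatSal} and~\ref{rem:perturbedGraph}). The paper's short proof glosses over this, and your longer route via Hammersley--Clifford on the edge-deleted graph $G_Z'$ is the honest one --- with the caveat that it effectively amends the hypotheses of the theorem rather than proving it exactly as written.
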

\begin{proof}
From lemma ~\ref{lemma:UfacGz} we know that $U$ factorizes according to 
$G_Z$. Thus, positivity of $P(U)$ implies that the pairwise Markov property is equivalent to $U$ factorizing
according to $G_Z$. See \cite{lauritzen1996gm}. Therefore,
$\textrm{sep}(i,j|V\setminus\{i,j\})$ (in $G_Z$) if and only if $u_i\indep
u_j\mid U_{\bar{i}\bar{j}} $. Note that $\textrm{sep}(i,j|V\setminus\{i,j\})$ means
precisely that $i - j \notin A_Z$. 
\end{proof}
\section{Simulation Results}\label{sec:results}
Power spectrum estimates were computed after  
$10^4$ simulation time steps. The estimated spectra were
then averaged over $100$ trials. The red boxes indicate the erroneous links introduced as a result of the network perturbation in addition to the the links in the true moral graph as indicated by the black boxes. 
For both the networks, the sequences $e_i$ are zero mean
white Gaussian noise.  
\subsection{Star Topology}
The transfer
function for each link is $z^{-1}$.
\subsubsection{Corrupted Leaf}
\noindent The perturbation considered here is the random delay model, \eqref{eq:randDelayMdl},
on node $2$:
\begin{equation*}
d_2[t]=\begin{cases}
3, & \textrm{ with probability } 0.65 \\
1, & \textrm{ with probability } 0.35.
\end{cases}
\end{equation*}
%
\scriptsize
$\\ \Phi ^{-1}_{uu}(z) =\\{\begin{bmatrix}
15.02	& \fbox{0.14} &	\fbox{1.49}&	\fbox{1.49} &\fbox{1.50}&	\fbox{1.50}&	\fbox{1.45}\\
0.14	&1.74	&0.05&	0.05&	0.05&	0.05	&0.04\\
1.49	&0.05	&2.36	&0.05	&0.06	&0.06	&0.06\\
1.49	&0.05	&0.05	&2.35	&0.06	&0.05	&0.06\\
1.50	&0.05	&0.06	&0.06	&2.36	&0.05	&0.05\\
1.50	&0.05	&0.06	&0.05	&0.05	&2.36	&0.05\\
1.45	&0.04	&0.06	&0.06	&0.05	&0.05	&2.34
\end{bmatrix}}$
\normalsize

As predicted by Theorem~\ref{thm:multiperturbation}, perturbation of Node $2$ for this
architecture does not introduce any erroneous links.  See Figure~\ref{fig:Corrupted leaf}.

\subsubsection{Corrupted Hub}
\noindent The perturbation considered here is a random delay on the
hub node:
\begin{equation*}
d_1[t]=\begin{cases}
2, & \textrm{ with probability } 0.75 \\
4, & \textrm{ with probability } 0.25.
\end{cases}
\end{equation*}

Theorem~\ref{thm:multiperturbation} predicts that perturbing the
central node could introduce erroneous links between all of the
nodes. See Figure~\ref{fig:Corrupted Hub}.

\scriptsize
$\\ \Phi ^{-1}_{uu}(z) =\\{\begin{bmatrix}
5.08	& \fbox{0.40} &	\fbox{0.40}&	\fbox{0.40} &\fbox{0.39}&	\fbox{0.39}&	\fbox{0.38}\\
0.40	&2.07	&\cfbox{red}{0.27}&	\cfbox{red}{0.27}&	\cfbox{red}{0.27}&	\cfbox{red}{0.26}	&\cfbox{red}{0.27}\\
0.40	&0.27	&2.08	&\cfbox{red}{0.27}	&\cfbox{red}{0.27}	&\cfbox{red}{0.28}	&\cfbox{red}{0.27}\\
0.40	&0.27	&0.27	&2.07	&\cfbox{red}{0.27}	&\cfbox{red}{0.27}	&\cfbox{red}{0.27}\\
0.39	&0.27	&0.27	&0.27	&2.07	&\cfbox{red}{0.27}	&\cfbox{red}{0.27}\\
0.39	&0.26	&0.28	&0.27	&0.27	&2.08	&\cfbox{red}{0.27}\\
0.38	&0.27	&0.27	&0.27	&0.27	&0.27	&2.08
\end{bmatrix}}$
\normalsize

\subsection{Chain Topology}
The chain topology  in
Figure~\ref{fig:cascade} is considered. The transfer functions are: between nodes $1$ and $2$,  $1.2+0.9z^{-1}$, between nodes $2$ and $3$,  $1+0.2z^{-1}$, between nodes $3$ and $4$,  $1-0.9z^{-1}+0.3z^{-2}$ and then for the last link $z^{-1}$. 
Figure~\ref{fig:cascade}
In the simulations, nodes $2$ and $3$ are simultaneously corrupted
with the random delay models 
\begin{equation*}
d_2[t]=\begin{cases}
1, & \textrm{ with probability } 0.83 \\
2, & \textrm{ with probability } 0.17.
\end{cases}
\end{equation*}
\begin{equation*}
d_3[t]=\begin{cases}
2, & \textrm{ with probability } 0.85 \\
4, & \textrm{ with probability } 0.15.
\end{cases}
\end{equation*}
\scriptsize
$\\ \Phi ^{-1}_{uu}(z) =\\{\begin{bmatrix}
4.23	& \fbox{0.54} &	\cfbox{red}{0.12}&	\cfbox{red}{0.25} & 0.05\\
0.54	&1.20	&\fbox{0.16}&	\cfbox{red}{0.13}&	0.02\\
0.12	&0.16	&1.06	&\fbox{0.12}	&0.02\\
0.25	&0.13	&0.12	&2.22	&\fbox{0.90}\\
0.05	&0.02	&0.02	&0.90	&1.42
\end{bmatrix}}$
\normalsize

Perturbation of $2$ adds a false relationship between $1$ and $3$. In addition, perturbation of $3$ introduces erroneous relations between the nodes $1$ and $4$ as well as between $2$ and $4$. Thus the erroneous relationships could arise between any nodes that are kins of $3$ including the already introduced false kins of $3$. Despite this cascaded effect the erroneous links remain local in the sense that the dependency of $5$ is unaffected. 
\begin{figure}
  \centering
  \begin{subfigure}{0.9\columnwidth}
    \centering
        \begin{tikzpicture}[scale=0.25]
                \tikzstyle{vertex}=[circle,fill=none,minimum size=10pt,inner sep=0pt,thick,draw]
        \tikzstyle{pvertex}=[star,star points=10,fill=white,minimum size=10pt,inner sep=0pt,thick,draw]
          \node[vertex] (n1) {$1$};
          \node[pvertex, right of=n1] (n2) {$2$};
          \node[vertex,right of=n2] (n3) {$3$};
          \node[vertex,right of=n3] (n4) {$4$};
          \node[vertex,right of=n4] (n5) {$5$};

          \draw[thick] (n1)--(n2);
          \draw[thick] (n2)--(n3);
          \draw[thick] (n3)--(n4);
          \draw[thick] (n4)--(n5);
          \draw[red,thick,dashed] (n1) to[out=40,in=140] (n3);
          
        \end{tikzpicture}
        \subcaption{\label{fig:cascade2} Node 2 Perturbed}
  \end{subfigure}
    \begin{subfigure}{0.9\columnwidth}
    \centering
        \begin{tikzpicture}[scale=0.25]
                \tikzstyle{vertex}=[circle,fill=none,minimum size=10pt,inner sep=0pt,thick,draw]
        \tikzstyle{pvertex}=[star,star points=10,fill=white,minimum size=10pt,inner sep=0pt,thick,draw]
          \node[vertex] (n1) {$1$};
          \node[vertex, right of=n1] (n2) {$2$};
          \node[pvertex,right of=n2] (n3) {$3$};
          \node[vertex,right of=n3] (n4) {$4$};
          \node[vertex,right of=n4] (n5) {$5$};

          \draw[thick] (n1)--(n2);
          \draw[thick] (n2)--(n3);
          \draw[thick] (n3)--(n4);
          \draw[thick] (n4)--(n5);
          \draw[red,thick,dashed] (n2) to[out=40,in=140] (n4);
        \end{tikzpicture}
        \subcaption{
        \label{fig:cascade3} Node 3 Perturbed
        }
  \end{subfigure}
  
      \begin{subfigure}{0.9\columnwidth}
    \centering
        \begin{tikzpicture}[scale=0.35]
                \tikzstyle{vertex}=[circle,fill=none,minimum size=12pt,inner sep=0pt,thick,draw]
        \tikzstyle{pvertex}=[star,star points=10,fill=white,minimum size=12pt,inner sep=0pt,thick,draw]
          \node[vertex] (n1) {$1$};
          \node[pvertex, right of=n1] (n2) {$2$};
          \node[pvertex,right of=n2] (n3) {$3$};
          \node[vertex,right of=n3] (n4) {$4$};
          \node[vertex,right of=n4] (n5) {$5$};k

          \draw[thick] (n1)--(n2);
          \draw[thick] (n2)--(n3);
          \draw[thick] (n3)--(n4);
          \draw[thick] (n4)--(n5);
          \draw[red,thick,dashed] (n1) to[out=40,in=140] (n3);
          \draw[red,thick,dashed] (n2) to[out=40,in=140] (n4);
          \draw[red,thick,dashed] (n1) to[out=-35,in=215] (n4);
        \end{tikzpicture}
        \subcaption{
        \label{fig:cascade23}
        Nodes 2 and 3 Perturbed
        }
  \end{subfigure}
  \caption{
    \label{fig:cascade} This figure shows how multiple perturbations
    can lead to a cascade effect  as predicted by Theorem \ref{thm:multiperturbation}.
     Here the original moral graph is a chain. \ref{fig:cascade2} and \ref{fig:cascade3} show the erroneous
    edges that can arise from perturbing a single node. If nodes $2$ and $3$ are both
    perturbed, then another erroneous link between $1$ and $4$ must be
    added.
  }
\end{figure}
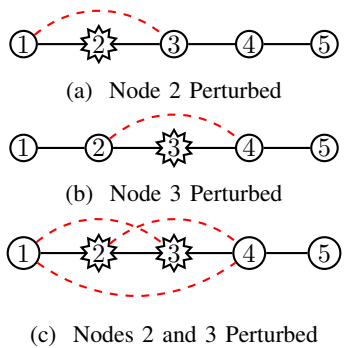
\section{Conclusion}\label{sec:conclude}
We studied the problem of inferring the network structure of
interacting agents from corrupt data-streams. We described general
model of data-corruption that introduces an additive term in the power
spectra and captures a wide class of measurement uncertainties. We
then studied inferring topology of a network of LTI systems from
corrupt data-streams. We established that network topology
reconstruction from corrupt data streams can result in erroneous links
between the nodes. Particularly we provided exact characterization by
proving that  the erroneous links are localized to the neighborhood of
the perturbed node. We then studied the influence of data corruption
on Markov random field models. Here we found that our characterization
of erroneous links for LTI systems precisely characterized the
spurious relationships that can arise in Markov random fields.

Our results show that data corruption gives
rise to the appearance of cliques that are localized around the corrupt
nodes.
Two natural future research directions emerge. The first direction would be to
prior structural knowledge to infer the location of corrupt nodes. For
example, in some power network problems, cliques cannot be present,
and so the appearance of a clique would indicate that data must have
been corrupted.
The other direction would be to use network reconstruction results of to
guide sensor placement algorithms. For example, if the neighborhood of
a node forms a clique, then our results suggest that this clique may
be due to data corruption, and thus a better sensor could be used to
rule out this possibility. 
\appendices

\section{Proof of Theorem~\ref{thm:perturbStats}}
\label{app:perturbationProof}

Define the following deviations from the mean: $\Delta A_i[t] = A_i[t] - \bar A_i$,
$\Delta B_i[t] = B_i[t] - \bar B_i$,
$\Delta C_i[t] = C_i[t] - \bar C_i$,
and $\Delta D_i[t] = D_i[t] - \bar D_i$

Note that the Lyapunov equation, \eqref{eq:genLyap}, can be expressed
as:
\begin{equation}
  P = \bar A_i^\top P \bar A_i + \mean[\Delta A_i[t]^\top P \Delta
  A_i[t]] + Q \succeq \bar A_i^\top P \bar A_i + Q.
\end{equation}
Here $S \preceq T$ denotes that $T - S$ is positive semidefinite. 
Since a solution must hold for all $Q$, it must hold, in particular
for positive definite $Q$. Thus, $\bar A_i$ must be a stable matrix. 

Set 
$\bar u_i[t]=(h_i \star y_i)[t] = \mean[u_i[t]| y_i]$, so that $\Delta
u_i[t] = u_i[t]-\bar u_i[t]$. 

With this notation, the cross spectrum, \eqref{eq:crossSpectrum}, will be derived:
\begin{align}
  R_{u_i y_i}[t] &= \mean[u_i[t]y_i[0]] \\
  \label{eq:crossTower}
                 &= \mean[ \mean[u_i[t]y_i[0]|y_i]] \\
                 &= \mean[ (h_i \star y_i)[t] y_i[0]] \\
  &= (h_i \star R_{y_iy_i})[t].
\end{align}
Here, \eqref{eq:crossTower} is due to the tower property of
conditional expectation. Then \eqref{eq:crossSpectrum} follows by taking
$Z$-transforms.

Since $\bar A_i$ is stable and $y_i[t]$ is wide-sense stationary, we
must have that $\bar u_i[t]$ is wide-sense stationary.

Note that by construction, $R_{u_iu_i}[t]= R_{\bar
  u_i \bar u_i}[t] + R_{\Delta u_i \Delta u_i}[t]$. Furthermore, we
must have that
\begin{equation}
  R_{\bar
  u_i \bar u_i}[t] = (h_i \star R_{yy} \star h_i^*)[t],
\end{equation}
where $h_i^*$ is the time-reversed, transposed impulse response. Thus,
\eqref{eq:uSpectrum} holds by taking $Z$-transforms.

The only part that remains to be proved is that $u_i$ is wide-sense
stationary. This will follow as long as $\Delta u_i[t]$ has a finite
autocorrelation.

To show that $R_{\Delta u_i \Delta u_i}[t]$ is bounded, we will explicitly construct
an expression for it. To derive this expression, we need expressions
for the autocorrelation of $x_i$ and the cross correlation between
$x_i$ and $y_i$.

Let $\bar x_i[t] =      \left(\left[ \begin{array}{c|c}
         \bar A_i & \bar B_i \\
                                       \hline
                                       I & 0
       \end{array}
     \right] \star y_i \right)[t]$
    and let $\Delta x_i[t] = x_i[t]-\bar x_i[t]$. Note that $\bar
    x_i[t] = \mean[x_i[t]|y_i]$.  As with $\bar u_i$, we have that
    $\bar x_i[t]$ is wide-sense stationary.
Using a derivation identical to that of
$R_{u_iy_i}[t]$, we have that the cross correlation of $x_i$ and $y_i$
is given by:
\begin{equation}
    R_{x_iy_i}[t] = \left(\left[
      \begin{array}{c|c}
        \bar A_i & \bar B_i \\
        \hline
        I & 0
      \end{array}
      \right] \star R_{y_iy_i}\right)[t]
  \end{equation}
  Thus, we see that $R_{x_iy_i}[t] = R_{\bar x_i y_i}[t]$.

    Now we will work out the autocorrelation of $x_i$. 
    The
    autocorrelation of $\bar x_i[t]$ is given by:
       \begin{equation}
        R_{\bar x_i\bar x_i}[t] = \left(\left[
      \begin{array}{c|c}
        \bar A_i & \bar B_i \\
        \hline
        I & 0
      \end{array}
      \right] \star R_{y_iy_i} \star \left[
      \begin{array}{c|c}
        \bar A_i & \bar B_i \\
        \hline
        I & 0
      \end{array}
      \right]^*\right)[t].
  \end{equation}
   
 By construction, we have that $R_{x_i x_i}[t] = R_{\bar x_i \bar x_i}[t] + R_{\Delta x_i \Delta x_i}[t]$. 
 The following lemma characterizes the
 autocorrelations of $\Delta x_i[k]$.

    \begin{lemma}
      Assume that a solution to the generalized Lyapunov equation,
      \eqref{eq:genLyap}, holds for all $Q$. 
      Then $R_{\Delta x_i \Delta x_i}[0]$ is uniquely defined by:
      \begin{multline}
        R_{\Delta x_i\Delta x_i}[0]= 
\mean[A_i[0] R_{\Delta x_i\Delta x_i}[0] A_i[0]^\top] + W
        \\  +\mean\left[
          \begin{bmatrix}
                            \Delta A_i[0] & \Delta B_i[0] 
                          \end{bmatrix}
                          \begin{bmatrix}
                            R_{\bar x_i\bar x_i}[0] &
                            R_{\bar x_iy_i}[0] \\
                            R_{y_i\bar x_i}[0] & R_{y_iy_i}[0]
                          \end{bmatrix}
                          \begin{bmatrix}
                            \Delta A_i[0]^\top \\ \Delta B_i[0]^\top 
                          \end{bmatrix}
        \right].
      \end{multline}
      For $k>0$,
      \begin{align*}
        R_{\Delta x_i \Delta x_i}[k] & = \bar A_i^k R_{\Delta x_i \Delta x_i}[0]
        \\
        R_{\Delta x_i \Delta x_i}[-k] &= R_{\Delta x_i \Delta x_i}[k]^\top.
      \end{align*}
      \end{lemma}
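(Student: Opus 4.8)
The plan is to derive a first-order linear recursion for the deviation process $\Delta x_i$, use the generalized Lyapunov hypothesis to guarantee that $\Delta x_i$ is a genuine finite-variance wide-sense stationary process, and then read the three claimed formulas directly off that recursion. Subtracting the state update $x_i[t+1]=A_i[t]x_i[t]+B_i[t]y_i[t]+w_i[t]$ from $\bar x_i[t+1]=\bar A_i\bar x_i[t]+\bar B_iy_i[t]$ — the latter being the state recursion of the system $[\bar A_i,\bar B_i;I,0]$, which equals $\mean[x_i[t+1]\mid y_i]$ because $M_i[t]$ and $w_i[t]$ are independent of $(x_i[t],y_i)$ and mean-zero — and writing $A_i[t]=\bar A_i+\Delta A_i[t]$, $B_i[t]=\bar B_i+\Delta B_i[t]$, I obtain
\begin{equation*}
\Delta x_i[t+1]=A_i[t]\,\Delta x_i[t]+\begin{bmatrix}\Delta A_i[t]&\Delta B_i[t]\end{bmatrix}\begin{bmatrix}\bar x_i[t]\\ y_i[t]\end{bmatrix}+w_i[t].
\end{equation*}
Call the last two terms $\eta_i[t]$; it is zero-mean, has finite second moment, and at time $t$ depends only on $M_i[t]$, $w_i[t]$, and $y_i$. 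Unrolling the recursion writes $\Delta x_i[t]=\sum_{j\ge0}(A_i[t-1]\cdots A_i[t-j])\eta_i[t-1-j]$; the hypothesis that \eqref{eq:genLyap} is solvable for every $Q$ is exactly the mean-square-stability condition (equivalently, the spectral radius of $X\mapsto\mean[A_i[t]XA_i[t]^\top]$ is strictly less than one) that makes this series converge in $L^2$, so $\Delta x_i$ is well defined, has finite variance, and is jointly WSS with $y_i$. Along the way I would record two orthogonality facts that follow from the tower property together with $\mean[x_i[t]\mid y_i]=\bar x_i[t]$: first, $\mean[\Delta x_i[t]\mid y_i]=0$, hence $\mean[\Delta x_i[t]\bar x_i[t]^\top]=0$ since $\bar x_i[t]$ is $y_i$-measurable; second, $R_{\Delta x_iy_i}[t]=R_{x_iy_i}[t]-R_{\bar x_iy_i}[t]=0$, using the identity $R_{x_iy_i}=R_{\bar x_iy_i}$ proved just above. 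I expect this existence/stationarity step to be the main obstacle, as it is the only place the Lyapunov assumption does real work and where care with the infinite series is required.

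For the $k=0$ identity I would square out the recursion, $R_{\Delta x_i\Delta x_i}[0]=\mean[\Delta x_i[t+1]\Delta x_i[t+1]^\top]$, into three ``diagonal'' contributions and three cross contributions. The diagonal terms yield $\mean[A_i[t]R_{\Delta x_i\Delta x_i}[0]A_i[t]^\top]$ (pulling the matrix $\Delta x_i[t]\Delta x_i[t]^\top$, which is independent of $M_i[t]$, through the expectation), $W=\mean[w_i[t]w_i[t]^\top]$, and the block-correlation term displayed in the lemma statement, again by independence of $M_i[t]$ from $(\bar x_i,y_i)$. Every cross term involving $w_i[t]$ vanishes because $w_i[t]$ is zero-mean and independent of the rest; and the cross term between $A_i[t]\Delta x_i[t]$ and $\eta_i[t]$ reduces, after conditioning on $M_i[t]$, to a multiple of $\mean[\Delta x_i[t]\begin{bmatrix}\bar x_i[t]^\top&y_i[t]^\top\end{bmatrix}]$, which is zero by the two orthogonality facts. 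It is worth flagging that $\Delta A_i[t]$ is \emph{not} independent of $A_i[t]$, so this last cancellation genuinely relies on those orthogonalities rather than on a product-of-independent-means argument. Uniqueness of the solution then follows because $X\mapsto X-\mean[A_i[t]XA_i[t]^\top]$ is invertible on symmetric matrices: \eqref{eq:genLyap} gives a unique solution for every positive definite right-hand side, and those span the symmetric matrices.

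For $k>0$ I would substitute $\Delta x_i[t+k]=A_i[t+k-1]\Delta x_i[t+k-1]+\eta_i[t+k-1]$ into $R_{\Delta x_i\Delta x_i}[k]=\mean[\Delta x_i[t+k]\Delta x_i[t]^\top]$. Conditioning on $y_i$, the $\eta_i[t+k-1]$ contribution dies: its $M$-dependent part is conditionally mean-zero and conditionally independent of $\Delta x_i[t]$ (the index $t+k-1$ and the indices $\le t-1$ on which $\Delta x_i[t]$ depends are disjoint), while its $w_i[t+k-1]$ part is mean-zero and independent of $\Delta x_i[t]$; and $A_i[t+k-1]$ is independent of the matrix $\Delta x_i[t+k-1]\Delta x_i[t]^\top$, so what survives is $\bar A_iR_{\Delta x_i\Delta x_i}[k-1]$. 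Induction on $k$ gives $\bar A_i^kR_{\Delta x_i\Delta x_i}[0]$. Finally, $R_{\Delta x_i\Delta x_i}[-k]=R_{\Delta x_i\Delta x_i}[k]^\top$ is the generic symmetry $R_{XX}[-k]=R_{XX}[k]^\top$ of the autocorrelation of a real WSS vector process, which finishes the proof.
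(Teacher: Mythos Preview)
Your proposal is correct and follows essentially the same approach as the paper: both derive the recursion $\Delta x_i[t+1]=A_i[t]\Delta x_i[t]+\Delta A_i[t]\bar x_i[t]+\Delta B_i[t]y_i[t]+w_i[t]$ and read the $k=0$ and $k>0$ formulas off of it, with the $k<0$ case following by transposition. Your treatment is in fact more explicit than the paper's in two respects---you spell out the orthogonality facts $\mean[\Delta x_i[t]\bar x_i[t]^\top]=0$ and $R_{\Delta x_iy_i}=0$ needed to kill the cross terms (the paper just cites independence of $\Delta x_i[k]$ from $(\Delta A_i[k],\Delta B_i[k])$ and leaves the rest implicit), and you make the existence/stationarity step via the Lyapunov hypothesis precise---while for $k>0$ the paper instead expands $R_{x_ix_i}[k]$ and subtracts off $R_{\bar x_i\bar x_i}[k]$, but this is a cosmetic variation on the same argument.
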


      \begin{proof}
        For $k>0$ we  have
        \begin{align*}
          \MoveEqLeft
          R_{\bar x_i\bar x_i}[k] + R_{\Delta x_i \Delta x_i}[k]
          \\
          & = \mean[x_i[k] x_i[0]^\top] \\
          &= \mean[(A_i[{k-1}]x_i[{k-1}]+B_i[{k-1}]
                                y_i[{k-1}])x_i[0]^\top] \\
                              &=\bar A_i R_{x_ix_i}(k-1)+\bar B_i R_{y_ix_i}(k-1) \\
          &= (\bar A_i R_{\bar x_i\bar x_i}(k-1) + \bar B_i R_{y_ix_i}(k-1))\\& +
            \bar A_i R_{\Delta x_i \Delta x_i}(k-1) \\
          &= R_{\bar x_i \bar x_i}[k]+ \bar A_i R_{\Delta x_i \Delta x_i}[k-1].
        \end{align*}
        Thus, the formula for $R_{\Delta x_i \Delta x_i}[k]$ holds for
        $k\ne 0$. (The expression for $k<0$ follows from transposing.)

        Note that
        \begin{align*}
          \MoveEqLeft
          \Delta x_i[{k+1}]
          \\
          &= (\bar A_i + \Delta A_i[k])(\bar x_i[k]+\Delta x_i[k])
            +(\bar B_i + \Delta B_i[k]) y_i[k] \\&
                                                  + w_i[k]- \bar A_i \bar x_i[k] - \bar
                           B_i y_i[k] \\
          &= A_i[k] \Delta x_i[k] + \Delta A_i[k] \bar x_i[k] + \Delta
            B_i[k] y_i[k] + w_i[k].
        \end{align*}
        Furthermore, note that $\Delta x_i[k]$ is independent of $\Delta
        A_i[k]$ and $\Delta B_i[k]$. The expression for $R_{\Delta x_i \Delta
          x_i}(0)$ follows by setting $\mean[\Delta x_i[{k+1}]\Delta
        x_i[{k+1}]^\top]=\mean[\Delta x_i[k]\Delta x_i[k]^\top]$.
        
        Note that $R_{\Delta x_i \Delta x_i}(0)$ can be computed from \eqref{eq:genLyap} with
        \begin{multline}
          Q = W + \\ \mean\left[
          \begin{bmatrix}
                            \Delta A_i[0] & \Delta B_i[0]
                          \end{bmatrix}
                          \begin{bmatrix}
                             R_{\bar x_i\bar x_i}(0) &
                            R_{\bar x_iy_i}(0) \\
                            R_{y_i\bar x_i}(0) & R_{y_iy_i}(0)
                          \end{bmatrix}
                          \begin{bmatrix}
                            \Delta A_i[0]^\top \\ \Delta B_i[0]^\top 
                          \end{bmatrix}
        \right] \end{multline}
      \end{proof}

      As discussed above, the proof of the theorem will be completed once the
      autocorrelation of $\Delta u_i$ is characterized. The following
      lemma gives the desired characterization.
    
        \begin{lemma}
          \label{lem:deltaU}
          For $k=0$, $R_{\Delta u_i\Delta u_i}[0]$ is given by
          
          \begin{multline}
            R_{\Delta u_i \Delta u_i}[0] = \bar C_i R_{\Delta x_i \Delta
              x_i}[0]\bar C_i^\top + V \\  + \mean
            \left[
                     \begin{bmatrix}
                            \Delta C_i[0] & \Delta D_i[0] 
                          \end{bmatrix}
                          \begin{bmatrix}
                            R_{x_ix_i}[0] &
                            R_{x_iy_i}[0] \\
                            R_{y_ix_i}[0] & R_{y_iy_i}[0]
                          \end{bmatrix}
                          \begin{bmatrix}
                            \Delta C_i[0]^\top \\ \Delta D_i[0]^\top 
                          \end{bmatrix}
            \right]
          \end{multline}
          For $k> 0$, $R_{\Delta u_i\Delta u_i}[k]$ is given by
          \begin{multline}
            R_{\Delta u_i \Delta u_i}[k] = \bar C_i R_{\Delta x_i
              \Delta x_i}[k]
            \bar C_i^\top  + \bar C_i \bar A_i^{k-1} S \\ +\bar C_i \bar A_i^{k-1}\mean
            \left[
                     \begin{bmatrix}
                            \Delta A_i[0] & \Delta B_i[0] 
                          \end{bmatrix}
                          \begin{bmatrix}
                            R_{x_ix_i}[0] &
                            R_{x_iy_i}[0] \\
                            R_{y_ix_i}[0] & R_{y_iy_i}[0]
                          \end{bmatrix} \cdot
                          \right. \\
                          \left.
                          \begin{bmatrix}
                            \Delta C_i[0]^\top \\ \Delta D_i[0]^\top 
                          \end{bmatrix}
            \right]
          \end{multline}
          For $k<0$,  $R_{\Delta u_i \Delta u_i}[k] = R_{\Delta u_i \Delta
            u_i}[-k]$. 
\end{lemma}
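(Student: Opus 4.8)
The plan is to reduce the computation to the already‑proved characterisation of $R_{\Delta x_i\Delta x_i}[k]$ by first isolating the ``innovations'' that $\Delta u_i$ carries. Writing $C_i[t]=\bar C_i+\Delta C_i[t]$, $D_i[t]=\bar D_i+\Delta D_i[t]$, $x_i[t]=\bar x_i[t]+\Delta x_i[t]$ in the output equation and using $\bar u_i[t]=\bar C_i\bar x_i[t]+\bar D_i y_i[t]$ (which itself follows from $\bar x_i[t]=\mean[x_i[t]\mid y_i]$ together with the independence of $M_i[t]$ and $v_i[t]$ from $y_i$ and their means), a short manipulation gives
\begin{equation*}
\Delta u_i[t]=\bar C_i\,\Delta x_i[t]+\delta_i[t],\qquad
\delta_i[t]:=\begin{bmatrix}\Delta C_i[t]&\Delta D_i[t]\end{bmatrix}\begin{bmatrix}x_i[t]\\ y_i[t]\end{bmatrix}+v_i[t}.
\end{equation*}
Consequently $R_{\Delta u_i\Delta u_i}[k]=\bar C_i R_{\Delta x_i\Delta x_i}[k]\bar C_i^\top+\bar C_i\,\mean[\Delta x_i[k]\delta_i[0]^\top]+\mean[\delta_i[k]\Delta x_i[0]^\top]\,\bar C_i^\top+\mean[\delta_i[k]\delta_i[0]^\top]$, and the whole argument comes down to evaluating the last three terms for $k=0$ and for $k>0$, the $k<0$ case being obtained by transposition.

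For $k=0$ the two cross terms drop out: $\Delta x_i[t]$ is a function of $\{M_i[s],w_i[s]:s<t\}$ and $y_i$, hence independent of $v_i[t]$ and of $M_i[t]$ (in particular of $\Delta C_i[t],\Delta D_i[t]$), and since these are zero‑mean we get $\mean[\Delta x_i[t]\delta_i[t]^\top]=0$. In $\mean[\delta_i[t]\delta_i[t]^\top]$ the $v_i[t]$ term yields $V$, and for the remaining quadratic form I would condition on $M_i[t]$ (independent of $(x_i[t],y_i[t])$) so as to replace $\begin{bmatrix}x_i[t]\\ y_i[t]\end{bmatrix}\begin{bmatrix}x_i[t]^\top&y_i[t]^\top\end{bmatrix}$ by its mean, the block matrix built from $R_{x_ix_i}[0],R_{x_iy_i}[0],R_{y_iy_i}[0]$; this reproduces the stated $k=0$ formula.

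For $k>0$ I would first re‑derive the state recursion $\Delta x_i[t+1]=\bar A_i\,\Delta x_i[t]+\begin{bmatrix}\Delta A_i[t]&\Delta B_i[t]\end{bmatrix}\begin{bmatrix}x_i[t]\\ y_i[t]\end{bmatrix}+w_i[t]$ (exactly as in the proof of the preceding lemma). Iterating it from $t=k-1$ down to $t=1$, every contribution to $\Delta x_i[k]$ other than $\bar A_i^{\,k-1}\Delta x_i[1]$ contains some $M_i[s]$ or $w_i[s]$ with $1\le s\le k-1$; these are IID and independent of $\delta_i[0]$ (which involves only quantities at times $\le 0$), so against $\delta_i[0]^\top$ they factor and vanish by zero‑meanness, giving $\mean[\Delta x_i[k]\delta_i[0]^\top]=\bar A_i^{\,k-1}\mean[\Delta x_i[1]\delta_i[0]^\top]$. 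In $\mean[\Delta x_i[1]\delta_i[0]^\top]$ the $\bar A_i\Delta x_i[0]$ piece contributes $0$ (same argument as the $k=0$ cross term), the $w_i[0]$ piece contributes $\mean[w_i[0]v_i[0]^\top]=S$ since $w_i[0]$ is independent of $M_i[0],x_i[0],y_i[0]$, and the $\begin{bmatrix}\Delta A_i[0]&\Delta B_i[0]\end{bmatrix}\begin{bmatrix}x_i[0]\\ y_i[0]\end{bmatrix}$ piece, after conditioning on $M_i[0]$, produces $\mean\big[\begin{bmatrix}\Delta A_i[0]&\Delta B_i[0]\end{bmatrix}\begin{bmatrix}R_{x_ix_i}[0]&R_{x_iy_i}[0]\\ R_{y_ix_i}[0]&R_{y_iy_i}[0]\end{bmatrix}\begin{bmatrix}\Delta C_i[0]^\top\\ \Delta D_i[0]^\top\end{bmatrix}\big]$. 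Finally, for $k>0$ both $\mean[\delta_i[k]\Delta x_i[0]^\top]$ and $\mean[\delta_i[k]\delta_i[0]^\top]$ vanish: conditioned on all randomness at times $\ne k$, the coefficients $\Delta C_i[k],\Delta D_i[k],v_i[k]$ of $\delta_i[k]$ are zero‑mean and independent of $x_i[k],y_i[k]$, so $\mean[\delta_i[k]\mid\cdot]=0$. Collecting the surviving terms gives the $k>0$ formula.

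The main obstacle is the correlation bookkeeping. Two ``same time‑index'' dependencies must be retained throughout: $w_i[t]$ with $v_i[t]$ (the off‑diagonal block $S$ in \eqref{eq:noiseCov}), and the blocks $\Delta A_i[t],\Delta B_i[t]$ with $\Delta C_i[t],\Delta D_i[t]$ (all entries of the single random matrix $M_i[t]$), whereas every dependence across distinct time indices is independent and must be argued to drop out; it is precisely these two retained correlations that yield the $\bar C_i\bar A_i^{k-1}S$ term and the expectation‑over‑$M_i$ term, so getting them right is what makes the proof nontrivial. A secondary point of care is justifying the ``condition on $M_i[t]$ and replace $(x_i,y_i)(x_i,y_i)^\top$ by its second‑moment matrix'' step, which relies on the independence of $M_i[t]$ from $(x_i[t],y_i[t])$ and on bounded second moments so that $R_{x_ix_i}$, $R_{x_iy_i}$ are well defined.
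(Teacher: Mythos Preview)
Your proposal is correct and follows essentially the same route as the paper. Both proofs use the identical decomposition $\Delta u_i[t]=\bar C_i\Delta x_i[t]+\Delta C_i[t]x_i[t]+\Delta D_i[t]y_i[t]+v_i[t]$, dispose of the $k=0$ cross terms via the independence of $\Delta x_i[t]$ from $(M_i[t],v_i[t])$, and for $k>0$ reduce $\mean[\Delta x_i[k]\,\delta_i[0]^\top]$ to $\bar A_i^{\,k-1}$ times a time-$1$ expression that splits into the $S$ term and the $\mean[\Delta A,\Delta B\,\cdots\,\Delta C^\top,\Delta D^\top]$ block; the only cosmetic difference is that the paper packages the reduction as a tower-property computation of $\mean[x_i[k]\mid\mathcal{F}]$ for a single $\sigma$-algebra $\mathcal{F}$ generated by $y_i$ and all time-$\le 0$ randomness, whereas you iterate the $\Delta x_i$ recursion and discard the time-$s\ge 1$ contributions one at a time by independence, arriving at the same $\bar A_i^{\,k-1}\Delta x_i[1]$.
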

\begin{proof}
 Note that $\Delta u_i[k]$ can be decomposed as:
  \begin{align}
    \MoveEqLeft[1]
    \Delta u_i[k] \\
    &= u_i[k]- \bar u_i[k]  \\
    &= (\bar C_i + \Delta C_i[k]) (\bar x_i[k] + \Delta x_i[k]) + (\bar D_i +
      \Delta D_i[k]) y_i[k]\\& + v_i[k]-\bar C_i\bar x_i[k]-\bar D_iy_i[k]\\
    \label{eq:deltaU}
    &= \bar C_i \Delta x_i[k] + \Delta  C_i[k] x_i[k] + \Delta D_i[k]
      y_i[k] + v_i[k]
  \end{align}
 
  As before, $\Delta x_i[k]$ is independent of $\Delta C_i[k]$ and $\Delta
  D_i[k]$. Thus, the expression for $R_{\Delta u_i\Delta u_i}[0]$ follows by
  computing $\mean[\Delta u_i[k]^2]$.

  For $k>0$, note that $\Delta C_i[k]$ and $\Delta D_i[k]$ are independent
  of $\Delta C_i[0]$ and $\Delta D_i[0]$. However, $\Delta x_i[k]$ may be
  correlated with $\Delta C_i[0]$, $\Delta D_i[0]$, and $v_i[0]$. So, multiplying
  the expression from \eqref{eq:deltaU} for $k>0$ and $k=0$ and
  dropping the $\Delta C_i[k]$ and $\Delta D_i[k]$ terms gives
  \begin{align}
    \nonumber
    \MoveEqLeft
    R_{\Delta u_i \Delta u_i}(k) 
    =
                               \mean\left[
                               \bar C_i \Delta x_i[k]
                               (\bar C_i\Delta x_i[0] + )^\top 
      \right] \\
    & + \mean\left[\bar C_i \Delta x_i[k] \left(\Delta
                               C_i[0] x_i[0] + \Delta D_i[0] y_i[0]+v_i[0]\right)^\top\right]
      \\
    \label{eq:deltaUCoupled}
    &= \bar C_i R_{\Delta x_i \Delta x_i}(k) \bar C_i^\top \\ & +
    \bar C_i \mean [\Delta x_i[k] (\Delta C_i[0] x_i[0] + \Delta
                                                                D_i[0]
                                                                y_i[0]
                                                                + v_i[0])^\top]
  \end{align}

  Let $A_i[j:k]$ be the product defined by $A_{i}[k:k] = I$ and $A_{i}[j:k] =
  A_{i,j}[{k-1}]A_i[{k-2}]\cdots A_i[j]$ for $j<k$. An induction argument shows
  that 
  \begin{align*}
      x_i[k] &= A_i[0:k] x_i[0] +  \sum_{j=0}^{k-1} A_i[{j+1}:{k}]( B_i[j] y_i[j]+w_i[j]) \\
           &= A_i[1:k] A_i[0] x_i[0] + B_i[0] y_i[0] + w_i[0]) + \\
           & + \sum_{j=1}^{k-1}
      A_i[{j+1}:{k}](B_i[j] y_i[j] + w_i[j]).
  \end{align*}
  
  Let $\mathcal{F}$ be the $\sigma$-algebra generated by $y_i$ and all
  of the random terms $(A_i[j],B_i[j],C_i[j],D_i[j],w_i[j],v_i[j])$
  for $i\le 0$. Then the
  expression for $x_i[k]$ implies that
  \begin{align*}
  \MoveEqLeft[2]
    \mean[x_i[k]|\mathcal{F}] =  \sum_{j=1}^{k-1}\bar A^{k-1-j} \bar B y_i[j] \\
    & +
                                \bar A^{k-1}\left(
      (\bar A +\Delta A_i[0])x_i[0] + (\bar B +\Delta B_0)y_i[0]
                             +w_i[0]\right) \\ 
                              &
                                = \bar x_i[k]+ \bar A^{k-1}\bar A
                                \Delta x_i[0] + \\
   &+ \bar A^{k-1}(\Delta A_i[0] x_i[0] + \Delta B_i[0] y_i[0]+w_i[0]).
  \end{align*}
  Using the tower property gives:
  \begin{align*}
    \MoveEqLeft
    \mean [\Delta x_i[k] (\Delta C_i[0] x_i[0] + \Delta D_i[0]
    y_i[0]+v_i[0])^\top]
    \\
    &=
\mean[  \mean [\Delta x_i[k] (\Delta C_i[0] x_i[0]  + \Delta D_i[0] y_i[0]+v_i[0])^\top
                                                                |\mathcal{F}]]
    \\
    &= \bar A^{k-1} \mean[(\Delta A_i[0] x_i[0] + \Delta B_i[0]
      y_i[0]+w_i[0])\\& \qquad \cdot (\Delta C_i[0]
                 x_i[0] + \Delta D_i[0] y_i[0]+v_i[0])^\top],
  \end{align*}
  where the last equality used that $\Delta x_i[0]$ is independent of
  $\Delta A_i[0]$, $\Delta B_i[0]$, and $v_i[0]$. 
Combining this result with
  \eqref{eq:deltaUCoupled} gives the desired expression for $R_{\Delta
    u\Delta u}(k)$. The expression for $R_{\Delta u\Delta u}(-k)$
  follows because $\Delta u_i[k]$ is a real scalar. 
\end{proof}

\bibliographystyle{IEEEtran}
\bibliography{ref}   
\begin{IEEEbiography} [{\includegraphics[width=1in,height=1.25in,clip,keepaspectratio]{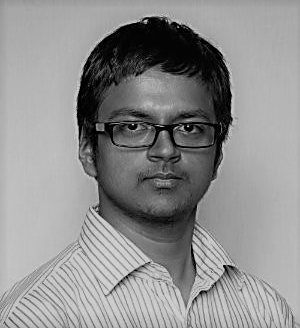}}]{Venkat Ram Subramanian}
received the B.Tech degree in electrical engineering from SRM University, Chennai, India, in 2014, and the M.S. degree in electrical engineering from the University of Minnesota, Minneapolis, in 2016. 

Currently, he is working towards a Ph.D. degree at the University of Minnesota. His Ph.D. research is on learning dynamic relations in networks from corrupt data-streams. In addition to system identification and stochastic systems, his research interests also include optimal control and graphical models.  
\end{IEEEbiography}

\begin{IEEEbiography}[{\includegraphics[width=1in,height=1.25in,clip,keepaspectratio]{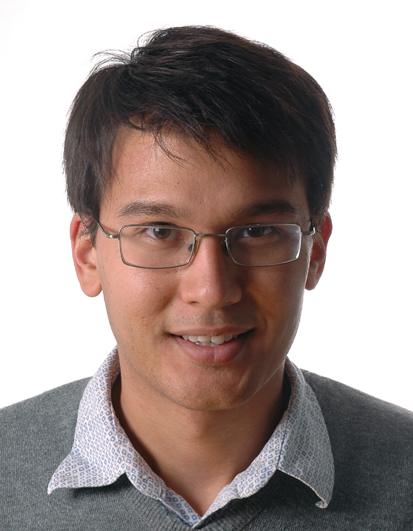}}] {Andrew Lamperski}
(S'05--M'11) received the B.S. degree in biomedical engineering and
mathematics in 2004 from the Johns Hopkins University, Baltimore, MD,
and the Ph.D. degree in control and dynamical systems in 2011 from the
California Institute of Technology, Pasadena. He held postdoctoral
positions in control and dynamical systems at the California Institute
of Technology from 2011--2012 and in mechanical engineering at The
Johns Hopkins University in 2012. From 2012--2014,
did
postdoctoral work in the Department of Engineering, University of
Cambridge, on a scholarship from the Whitaker International
Program. In 2014, he joined the Department of Electrical and Computer
Engineering, University of Minnesota as an Assistant Professor. His
research interests include optimal control, optimization, and identification, with applications to neuroscience and robotics.
\end{IEEEbiography}

\begin{IEEEbiography}[{\includegraphics[width=1in,height=1.25in,clip,keepaspectratio]{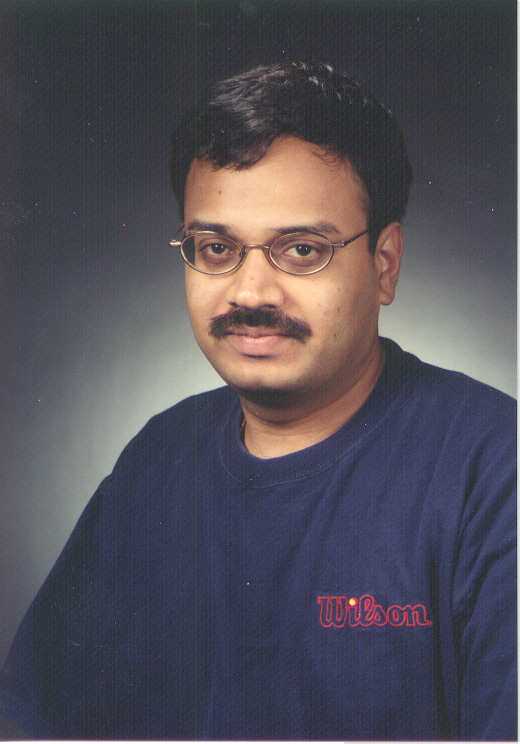}}] {Murti Salapaka} (SM'01--F'19)
  Murti Salapaka received the bachelor’s degree from the Indian 
Institute of Technology, Madras, India, in 1991, and the Master’s and 
Ph.D. degrees from the University of California, Santa Barbara, CA, USA, 
in 1993 and 1997, respectively, all in mechanical engineering. He was 
with Electrical Engineering department, Iowa State University, from 1997 
to 2007. He is currently the Vincentine Hermes-Luh Chair Professor with 
the Electrical and Computer Engineering Department, University of 
Minnesota, Minneapolis, MN, USA. Prof. Salapaka was the recipient of the 
NSF CAREER Award and the ISU—Young Engineering Faculty Research Award 
for the years 1998 and 2001, respectively. He is an IEEE Fellow.
\end{IEEEbiography}

\end{document}